\definecolor{Gray}{gray}{0.85}
\newcolumntype{a}{>{\columncolor{Gray}}r}
\theoremstyle{plain}%
\newtheorem{theorem}{Theorem}
\newtheorem{lemma}[theorem]{Lemma}
\newtheorem{corollary}[theorem]{Corollary}
\theoremstyle{plain}%
\newtheorem{definition}[theorem]{Definition}
\newcommand{\remove}[1]{}%
\newcommand{\ProbC}{{\mathbb{P}}}
\newcommand{\Prob}[1]{\ProbC\left[ #1 \right]}
\renewcommand{\Re}{\mathbb{R}}%
\newcommand{\reals}{\Re}%
\newcommand{\posreals}{\Re^+}%
\newcommand{\nat}{\mathbb{N}}%
\newcommand{\ultrametric}{\ensuremath{\Delta}}
\newcommand{\metric}{\ensuremath{\ell}}
\newcommand{\euclideanmetric}{\ensuremath{\ell_2}}
\newcommand{\enorm}[1]{\left\lVert #1 \right\rVert_2}
\newcommand{\BUF}{\ensuremath{\textsc{BUF}_\infty}}
\newcommand{\distortionratio}{\ensuremath{\textsc{dist}_\infty}}
\newcommand{\CutW}{\textsc{CW}}
\newcommand{\ACW}{\textsc{ACW}}
\newcommand{\cutw}[1]{\CutW(#1)}
\newcommand{\acutw}[1]{\ACW(#1)}
\newcommand{\lca}{\ensuremath{\textsc{lca}}}
\newcommand{\cO}{O}
\newcommand{\cOtilde}{\tilde{O}}
\newcommand{\afnpb}{\textsc{AFN}\xspace}
\newcommand{\Cc}{\mathcal{C}}
\newcommand{\Dd}{\mathcal{D}}
\newcommand{\Hh}{\mathcal{H}}
\newcommand{\iris}{\texttt{IRIS}}
\newcommand{\mice}{\texttt{MICE}}
\newcommand{\pendigits}{\texttt{PENDIGITS}}
\newcommand{\shuttle}{\texttt{SHUTTLE}}
\newcommand{\diabetes}{\texttt{DIABETES}}
\newcommand{\fastbuf}{\textsc{FastUlt}\xspace}
\title{A $(1+\epsilon)$-Approximation for Ultrametric Embedding in Subquadratic Time}
\author{
    Gabriel Bathie\thanks{LaBRI, University of Bordeaux, France. \url{https://perso.ens-lyon.fr/gabriel.bathie/}} \and
    Guillaume Lagarde\thanks{LaBRI, University of Bordeaux, France. \url{https://guillaume-lagarde.github.io/}}
    }
\begin{document}
\maketitle
\begin{abstract}
    Efficiently computing accurate representations of high-dimensional data is essential for data analysis and unsupervised learning. Dendrograms, also known as ultrametrics, are widely used representations that preserve hierarchical relationships within the data. However, popular methods for computing them, such as \emph{linkage} algorithms, suffer from quadratic time and space complexity, making them impractical for large datasets. 
    The ``best ultrametric embedding'' (a.k.a. ``best ultrametric fit'') problem, which aims to find the ultrametric that best preserves the distances between points in the original data, is known to require at least quadratic time for an exact solution.
    Recent work has focused on improving scalability by approximating optimal solutions in subquadratic time, resulting in a $(\sqrt{2} + \epsilon)$-approximation (Cohen-Addad, de Joannis de Verclos and Lagarde, 2021).

    In this paper, we present the first subquadratic algorithm that achieves arbitrarily precise approximations of the optimal ultrametric embedding. Specifically, we provide an algorithm that, for any $c \geq 1$, outputs a $c$-approximation of the best ultrametric in time $\tilde{O}(n^{1 + 1/c})$. In particular, for any fixed $\epsilon > 0$, the algorithm computes a $(1+\epsilon)$-approximation in time $\tilde{O}(n^{2 - \epsilon + o(\epsilon^2)})$.

    Experimental results show that our algorithm improves upon previous methods in terms of approximation quality while maintaining comparable running times.
\end{abstract}


\section{Introduction}\label{sec:intro}

Clustering is a fundamental technique in data analysis that is used to group similar data points. It helps in uncovering underlying patterns, segmenting data into meaningful categories, and simplifying data representation. Applications of clustering span various domains, including for example bioinformatics, market segmentation, social network analysis, image processing, feature learning, spatial and geoscience, and many others; we refer to \cite{MurtaghC17} for a comprehensive list of references.

While 'flat' clustering methods, such as k-means, can effectively partition data into distinct groups, they often fall short when dealing with complex data structures, in particular data points that exhibit multiscale structures. These methods typically assume a fixed number of clusters or rely on specific distance thresholds, which may not capture the true nature of the data. Consequently, they can struggle to reveal the intricate relationships between data points.

Hierarchical clustering addresses these limitations by building a multi-level hierarchy of clusters. This approach does not require specifying the number of clusters a priori and can reveal the nested structure of the data at all levels of granularity. Hierarchical clustering iteratively splits or merges data points based on similarity, forming a tree-like representation known as a dendrogram. This tree structure provides a comprehensive view of the data's organization, allowing for more nuanced interpretations and insights.

Dendrograms and hierarchical clusterings are formalized and quantified using the mathematical concept of \emph{ultrametric}\footnote{An ultrametric is a metric that satisfies a stronger triangle inequality: for any three points, the distance between any two points is at most the maximum of the distances between the other two pairs.}. Ultrametrics provides a rigorous foundation for hierarchical clustering and allows the development and analysis of efficient algorithms to compute and analyze the hierarchical structure of data, see for example \cite{jain1988algorithms} or \cite{CarlssonM10}. 

The most popular methods for constructing an ultrametric are the \emph{agglomerative algorithms}, such as single linkage, complete linkage, average linkage, and Ward's method. These algorithms work \emph{bottom-up} and build an ultrametric by iteratively merging the closest clusters based on a distance metric. While widely used and successful in many applications, they suffer from two major limitations: first, it is not clear what objective functions these methods aim to optimize, making them difficult to analyze and lacking approximation guarantees; second, they have at least quadratic time and space complexity, making them impractical for handling large datasets.

In this paper, we consider the problem of computing the best ultrametric fit ($\BUF$), which is quantified by how well an ultrametric preserves the distances of the original metric using the notion of \emph{maximum distortion}. Formally, given a set \(X\) of points in Euclidean space, our goal is to find an ultrametric \(\ultrametric\) such that for all \(x, y \in X\), \(\enorm{x - y} \leq \ultrametric(x, y) \leq c \cdot \enorm{x - y}\), where \(c\) is as small as possible. This problem was first introduced by \citet{FKW95}, who proved that it cannot be solved in subquadratic time and provided an exact algorithm matching this lower bound.

To overcome this quadratic time barrier, \citet{CKL20} and \citet{CDL21} initiated the development of faster algorithms for computing \emph{approximations} of the optimal ultrametric. They proposed respectively algorithms achieving a \(5c\)-approximation and a \(\sqrt{2}c\)-approximation in time $\cOtilde(n^{1 + 12/c^2})$, thus providing subquadratic algorithms to approximate $\BUF$ up to factors of \(\approx 17.32\) and \(\approx 4.90\), respectively.

\subsection{Our contribution}
The fundamental question explored in this paper is whether \textbf{we can achieve arbitrarily precise approximations of the optimal ultrametric fit in subquadratic time}, or if there exists a theoretical barrier preventing this.

We answer this question positively by constructing the first subquadratic algorithm that achieves arbitrarily precise approximations of the optimal solution to $\BUF$. More precisely, we prove the following theorem:

\begin{theorem}\label{thm:main}
    For any $\gamma \geq 1$ and $\alpha > 1$, there exists an algorithm that computes a $\gamma \cdot \alpha$-approximation of $\BUF$ in time $\cOtilde(n^{1 + 1/\gamma^2} + n^{1 + 1/\alpha^2})$ and space $\cOtilde(n^{1 + 1/\gamma^2} + n^{1 + 1/\alpha^2})$.
\end{theorem}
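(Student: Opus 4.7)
The plan is to obtain the $\gamma \cdot \alpha$ approximation as the product of two essentially independent approximation sources inside a single two-parameter algorithm. The first source, controlled by $\gamma$, is a hierarchical scale-based partition of $X$ that already gives an ultrametric skeleton approximating the optimal $\BUF$ up to a factor of $\gamma$. The second source, controlled by $\alpha$, is a Boruvka-style merging routine that uses $\alpha$-approximate nearest-neighbor queries to connect clusters inside the skeleton. Each phase admits an LSH-style subquadratic complexity in its own parameter (matching the Andoni--Indyk exponent $1/c^2$), leading to the additive running time $\cOtilde(n^{1+1/\gamma^2} + n^{1+1/\alpha^2})$.

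Concretely, I would first build $O(\log_\gamma \Phi)$ distance scales $r_0 < r_1 < \ldots$ with ratio $\gamma$, where $\Phi$ is the aspect ratio of $X$, and at each scale an LSH-based near-neighbor partition whose cells have diameter $\cO(\gamma \cdot r_i)$ and such that two points at distance $\leq r_i$ collide with high probability; constructing and querying all of these takes $\cOtilde(n^{1+1/\gamma^2})$ total time. Then, running a Boruvka-like procedure on top of this hierarchy, where each active cluster issues a single $\alpha$-approximate nearest-neighbor query implemented by a separate LSH tuned with parameter $\alpha$ (cost $\cOtilde(n^{1/\alpha^2})$ per query, $\cOtilde(n^{1+1/\alpha^2})$ overall after $O(\log n)$ phases), produces a spanning tree whose edges are within an $\alpha$-factor of the true MST edges at the relevant scale. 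Finally, output the ultrametric $\ultrametric$ defined by taking $\ultrametric(x,y)$ to be the maximum weight on the tree path between $x$ and $y$; this is the natural candidate since the optimal $\BUF$ ultrametric coincides with the MST-induced ultrametric.

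The main obstacle will be the distortion analysis, which must argue that the two error sources compose multiplicatively rather than additively or with an extra constant. I expect to need two ingredients: (i) every optimal MST edge at scale $r_i$ is discovered in the correct Boruvka round by an ANN query restricted to an appropriate cell of the hierarchy, and (ii) the distance returned by that query, first perturbed by $\alpha$ and then rounded up to the nearest scale in the hierarchy (a $\gamma$ loss), stays within $\gamma \cdot \alpha$ of the true edge length. Combined with the standard fact that a $\beta$-approximate MST yields a $\beta$-approximate ultrametric fit via the path-maximum characterization, this would give the theorem. Keeping the failure probability small across all scales and phases via a union bound, while ensuring the space does not exceed $\cOtilde(n^{1+1/\gamma^2} + n^{1+1/\alpha^2})$, will likely require carefully tuning the LSH parameters and reusing the scale-hierarchy data structure across Boruvka rounds rather than rebuilding it.
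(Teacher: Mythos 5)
There is a fundamental gap in your plan: the object you propose to output, the path-maximum ultrametric of an (approximate) spanning tree, does not solve $\BUF$. The path-max ultrametric of the MST is the \emph{subdominant} ultrametric, which is the largest ultrametric \emph{dominated by} $\metric$; it therefore fails the feasibility requirement $\ultrametric(x,y) \ge \metric(x,y)$ of \cref{sec:preliminaries} whenever $\metric$ is not already ultrametric. (Take three points with $\metric(a,b)=\metric(b,c)=1$, $\metric(a,c)=2$: the MST is the path $a$--$b$--$c$ and its path-max gives $\ultrametric(a,c)=1 < \metric(a,c)$.) So the statement ``the optimal $\BUF$ ultrametric coincides with the MST-induced ultrametric'' is false, and the whole distortion analysis built on that premise does not get started. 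The exact $\BUF$ solution of \citet{FKW95} is instead obtained from the \emph{cut weights} of the MST (\cref{def:cw}), the maximum distance across each tree cut, assembled into an ultrametric via the Cartesian tree. Your plan contains no step that computes or approximates these cut weights, and since they require approximate \emph{farthest}-neighbor information (not nearest-neighbor queries, as in your Boruvka phase), this is not a detail that can be patched after the fact.

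The paper's proof of \cref{thm:main} is much more modular than what you sketch, and the $\gamma\cdot\alpha$ factor does not require a fresh composition argument. It invokes the black-box high-level algorithm of \citet{CKL20} (\cref{algo:highest-level-approx} together with \cref{thm:main-approx}): compute a $\gamma$-Kruskal tree, compute an $\alpha$-approximation of its cut weights, and build the Cartesian tree; this is already a $\gamma\alpha$-approximation. The two new ingredients are \cref{thm:gammakt}, a $\gamma$-KT in $\cOtilde(n^{1+1/\gamma^2}\log\delta)$ time via an LSH-guided local BFS rather than a Boruvka phase or a $\gamma$-spanner (which cannot be built in subquadratic time for $\gamma < \sqrt 2$), and \cref{thm:cw}, an $\alpha$-approximate cut-weight routine in $\cOtilde(n^{1+1/\alpha^2})$ time built on a mergeable (dynamic) version of the approximate farthest-neighbor structure of \citet{pagh2017approximate}. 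Adding the running times gives \cref{thm:main}. Your use of LSH at $O(\log\delta)$ scales for the tree-building step is in the right spirit, and a Boruvka/ANN tree could plausibly be adapted into a $\gamma$-KT with more care, but without the cut-weight step (and its farthest-neighbor machinery) the proposal cannot yield any $\BUF$ approximation, let alone the $\gamma\cdot\alpha$ bound.
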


Previously, the best known subquadratic time algorithm for \BUF, by \citet{CDL21}, could only handle approximation factors greater than $\sqrt{2} \cdot \sqrt{12} \approx 4.90$, while we can now achieve arbitrarily precise approximations. Our algorithm is based on two main components:

\begin{itemize}
    \item An algorithm\footnote{An algorithm was proposed in \cite{CKL20} and \cite{CDL21}, but it operates in subquadratic time only when \(\gamma \geq \sqrt{12}\)} for computing a \(\gamma\)-Kruskal Tree (abbreviated \(\gamma\)-KT) in time \(\tilde{\mathcal{O}}(n^{1+1/\gamma^2})\). This algorithm might be of independent interest. Unlike previous algorithms for $\gamma$-KT, it does not require the construction of a \(\gamma\)-spanner, which cannot be built in subquadratic time for \(\gamma < \sqrt{2}\) \citep{andoni2023sub}.
    \item A new data structure that computes an \(\alpha\)-approximation of the so-called cut weights in time \(\tilde{\mathcal{O}}(n^{1+1/\alpha^2})\) (See \cref{def:cw} for the definition of the cut weights). The previous best subquadratic time algorithm was only able to handle approximation factors greater than $\sqrt 2$, see~\cite{CDL21}. Our data structure is based on a dynamic version of the approximate farthest neighbor data structure developed by \citet{pagh2017approximate}.
\end{itemize}

\subsection{Experimental results}

To complement our theoretical results and to demonstrate the practical efficiency of our algorithm, we perform an extensive set of experiments.
We measure the performance of our algorithm both in terms of approximation factor and running time on five classical real-world datasets,
and evaluate its scalability on large synthetic datasets.
We compare our algorithm with the state-of-the-art algorithm of \citet{CDL21}
and the widely used implementation of the \texttt{fastcluster} Python package.
The results show that our algorithm yields better approximations than the algorithm of \citet{CDL21} while maintaining a comparable running time and that it can scale to datasets containing millions of points. 

\subsection{Related work}

\citet{CarlssonM10} established important foundations for hierarchical clustering, in particular through an in-depth study of ultrametrics that revealed the theoretical properties of hierarchical clustering algorithms.

Other works have explored the complexity of optimizing other distortion measures, such as the average distortion ($\ell_1$ norm) and more generally, the \(\ell_p\) norm for different values of \(p\). The problem is NP-complete for \(p = 1, 2\) and APX-hard for \(p = 1\) (see \cite{wareham1993complexity} and \cite{agarwala1998approximability}). \citet{ailon2011fitting} investigated the case of the \(\ell_p\) norm for various values of \(p\) and provided polynomial-time algorithms to \(O((\log n \log \log n)^{1/p})\)-approximate both the best ultrametric and tree metric embeddings using an LP formulation and rounding techniques. These studies consider problems that are NP-hard (at least for $p=1,2$) and provide approximation algorithms, but do not focus on scalability: their algorithms have a complexity of \(\Omega(n^4)\).

Subquadratic time algorithms in high-dimensional settings, such as those in our work, have been studied by \cite{gilpin2013efficient} and \cite{cochez2015twister}, who provide near-linear time algorithms in the best running case. However, these algorithms lack approximation guarantees for their outputs.

Another line of research focuses on different objective functions to quantify the quality of hierarchical clustering. For instance, \citet{dasgupta2016cost} introduced an objective function based on cluster properties. Since then, numerous efforts have been devoted to developing algorithms that optimize or approximate this and related metrics, see e.g. \cite{roy2017hierarchical}, \cite{charikar2017approximate} and \cite{cohen2017hierarchical}.

Significant work has also been done to understand the guarantees provided by popular algorithms. Recently, \cite{cohen2019hierarchical} and \cite{moseley2023approximation} proved that average linkage has a small constant approximation ratio for the dual of Dasgupta's objective function, while other methods, such as the bisecting $k$-means top-down approach, perform poorly for the same objective.

\subsection{Organization of the Paper}
First, we introduce the necessary definitions and notations in \cref{sec:preliminaries}. In \cref{sec:high-level-algo} we present the high-level algorithm of \cite{CKL20} and \cite{CDL21}, on which our work is based, and highlight our improvements. Next, we describe in detail our algorithms to compute a $\gamma$-KT and an $\alpha$-approximation of the cut weights for any $\gamma \geq 1, \alpha \geq 1$ in \cref{sec:kt} and \cref{sec:cw} respectively. Finally, we discuss our experimental results in detail in \cref{sec:experiments}.

\section{Preliminaries}\label{sec:preliminaries}
We use the standard complexity definition of $\cOtilde(\cdot)$, which hides polylogarithmic factors in the input size. Formally, a function $f(n)$ is in $\cOtilde(g(n))$ if there exists a constant $c > 0$ such that $f(n) \leq c \cdot g(n) \cdot \log^k n$ for some $k \in \nat$ and all large enough $n \in \nat$. 

For any dimension $d \in \nat$, we denote by $\euclideanmetric: \reals^d \times \reals^d \rightarrow \posreals$ the Euclidean distance between two points $x, y \in \reals^d$.
Given a set $X$, an ultrametric distance $\ultrametric$ on $X$ is a distance function $\ultrametric: X \times X \rightarrow \posreals$ where the triangle inequality is replaced by the stronger ultrametric inequality:
\[\forall x, y, z \in X, \ultrametric(x, z) \leq \max(\ultrametric(x, y), \ultrametric(y, z)).\]
In this case, $(X, \ultrametric)$ is an ultrametric space.
When $X$ is a finite set, an ultrametric can be represented by a tree $T$ together with a weight function $w: T \rightarrow \posreals$, such that
\begin{itemize}
    \item each element of $X$ is assigned to a leaf of $T$,
    \item the weight of each leaf is 0,
    \item the weights are decreasing along any path from the root to a leaf.
\end{itemize}

The ultrametric distance induced by $T$ and $w$ is denoted by $\ultrametric_T$ and is defined as $\ultrametric_T(u, v) = w(\lca(u, v))$, where $\lca(u, v)$ represents the least common ancestor of $u$ and $v$ in $T$. Using a tree representation of an ultrametric is useful for visualization and interpretation. 
Specifically, the tree structure can be viewed as a hierarchical clustering of data points, with each node in the tree representing a cluster formed by the leaves of the subtree rooted at that node. A cut through the tree corresponds to a partitioning of the data points into clusters, with cuts made at different levels of the tree yielding clusterings of different granularities. See \cref{fig:combined-ultrametrics} for an illustration. This representation has significant utility and numerous applications; see for example the extensive survey of \citet{MurtaghC17}.

  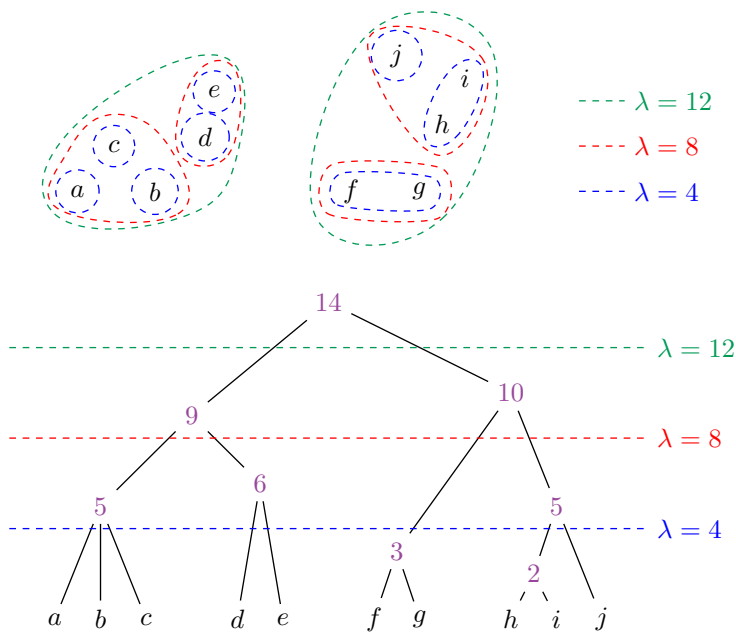
\begin{figure}[ht!]
    \centering
    \begin{center}
        \begin{tikzpicture}[scale=0.3]
          \tikzmath{
              \xscale = 2;
              \yscale = 2;
              \lwidth = 2pt * 0.25; 
              \dashon = 10 * 0.25;
              \dashoff = 10*0.25;
          }
            \node[draw, dash pattern={on \dashon pt off \dashoff pt}, line width=\lwidth, blue, circle] (a) at (0*\xscale, 0*\yscale) {$\color{black}a$};
            \node[draw, dash pattern={on \dashon pt off \dashoff pt}, line width=\lwidth, blue, circle] (b) at (1.7*\xscale, 0*\yscale) {$\color{black}b$};
            \node[draw, dash pattern={on \dashon pt off \dashoff pt}, line width=\lwidth, blue, circle] (c) at (0.8*\xscale, 1*\yscale) {$\color{black}c$};
            
            \node[draw, dash pattern={on \dashon pt off \dashoff pt}, line width=\lwidth, blue, circle] (d) at (2.8*\xscale, 1.2*\yscale) {$\color{black}d$};
            \node[draw, dash pattern={on \dashon pt off \dashoff pt}, line width=\lwidth, blue, circle] (e) at (3*\xscale, 2.2*\yscale) {$\color{black}e$};
            
            \node[] (f) at (6*\xscale, 0*\yscale) {$\color{black}f$};
            \node[] (g) at (7.5*\xscale, 0*\yscale) {$\color{black}g$};
            
            \node[] (h) at (8*\xscale, 1.5*\yscale) {$\color{black}h$};
            \node[] (i) at (8.5*\xscale, 2.5*\yscale) {$\color{black}i$};
            \node[draw, dash pattern={on \dashon pt off \dashoff pt}, line width=\lwidth, blue, circle] (j) at (7*\xscale, 3*\yscale) {$\color{black}j$};
      
            \draw[line width=\lwidth, dash pattern={on \dashon pt off \dashoff pt}, blue] (11*\xscale, 0*\yscale) -- (12*\xscale, 0*\yscale) node [right] {$\lambda = 4$};
            \draw[line width=\lwidth, dash pattern={on \dashon pt off \dashoff pt}, red] (11*\xscale, 1*\yscale) -- (12*\xscale, 1*\yscale) node [right] {$\lambda = 8$};
            \draw[line width=\lwidth, dash pattern={on \dashon pt off \dashoff pt}, ForestGreen] (11*\xscale, 2*\yscale) -- (12*\xscale, 2*\yscale) node [right] {$\lambda = 12$};
      
            \draw[line width=\lwidth, dash pattern={on \dashon pt off \dashoff pt}, red] plot [smooth cycle, tension=.8] coordinates {(-.4*\xscale, -.4*\yscale) (2.2*\xscale, -.5*\yscale) (2*\xscale, 1*\yscale) (0.8*\xscale, 1.7*\yscale) (-.2*\xscale, 0.9*\yscale)}; 
      
            \draw[line width=\lwidth, dash pattern={on \dashon pt off \dashoff pt}, red] plot [smooth cycle, tension=.8] coordinates {(2.2*\xscale, .9*\yscale) (3.3*\xscale, .8*\yscale) (3.5*\xscale, 2.6*\yscale) (2.6*\xscale, 2.6*\yscale)}; 
      
            \draw[line width=\lwidth, dash pattern={on \dashon pt off \dashoff pt}, blue] plot [smooth cycle] coordinates {(5.8*\xscale, .4*\yscale) (5.7*\xscale, -.3*\yscale) (7.7*\xscale, -.4*\yscale) (7.8*\xscale, .3*\yscale)}; 
            \draw[line width=\lwidth, dash pattern={on \dashon pt off \dashoff pt}, red] plot [smooth cycle] coordinates {(5.6*\xscale, .6*\yscale) (5.5*\xscale, -.5*\yscale) (7.9*\xscale, -.6*\yscale) (8*\xscale, .5*\yscale)}; 
      
            \draw[line width=\lwidth, dash pattern={on \dashon pt off \dashoff pt}, blue] plot [smooth cycle, tension=.9] coordinates {(7.6*\xscale, 1.3*\yscale) (8.3*\xscale, 1.2*\yscale) (8.9*\xscale, 2.6*\yscale) (8.2*\xscale, 2.7*\yscale)}; 
      
            \draw[line width=\lwidth, dash pattern={on \dashon pt off \dashoff pt}, red] plot [smooth cycle, tension=.9] coordinates {(7.9*\xscale, .8*\yscale) (8.9*\xscale, 2.9*\yscale) (6.4*\xscale, 3.4*\yscale)}; 
      
            \draw[line width=\lwidth, dash pattern={on \dashon pt off \dashoff pt}, ForestGreen] plot [smooth cycle, tension=.8] coordinates {(0*\xscale, 1.5*\yscale) (-.5*\xscale, -.5*\yscale) (2.2*\xscale, -.6*\yscale) (3.5*\xscale, .8*\yscale) (3.1*\xscale, 3*\yscale)}; 
      
            \draw[line width=\lwidth, dash pattern={on \dashon pt off \dashoff pt}, ForestGreen] plot [smooth cycle, tension=.8] coordinates {(5.2*\xscale, -.5*\yscale) (8*\xscale, -.6*\yscale)  (9.1*\xscale, 3.1*\yscale) (6.4*\xscale, 3.5*\yscale)}; 
      
        \end{tikzpicture}            
        \end{center}
    \begin{center}
        \tikzmath{
          \linewidth=0.5pt;
          \dashon = 10 * 0.25; 
          \dashoff = 10*0.25;
        }
        \begin{tikzpicture}[scale=0.3]
            \node (a) at (0*2, 0) {$a$};
            \node (b) at (1*2, 0) {$b$};
            \node (c) at (2*2, 0) {$c$};
            
            \node (d) at (4*2, 0) {$d$};
            \node (e) at (5*2, 0) {$e$};
            
            \node (f) at (7*2, 0) {$f$};
            \node (g) at (8*2, 0) {$g$};
            
            \node (h) at (10*2, 0) {$h$};
            \node (i) at (11*2, 0) {$i$};
            \node (j) at (12*2, 0) {$j$};
        
            \node (abc) at (1*2, 5) {\color{Purple}5};
            \draw[line width=\linewidth, black] (a) -- (abc);
            \draw[line width=\linewidth, black] (b) -- (abc);
            \draw[line width=\linewidth, black] (c) -- (abc);
            \node (de) at (4.5*2, 6) {\color{Purple}6};
            \draw[line width=\linewidth, black] (d) -- (de);
            \draw[line width=\linewidth, black] (e) -- (de);
            \node (fg) at (7.5*2, 3) {\color{Purple}3};
            \draw[line width=\linewidth, black] (f) -- (fg);
            \draw[line width=\linewidth, black] (g) -- (fg);
            \node (hi) at (10.5*2, 2) {\color{Purple}2};
            \draw[line width=\linewidth, black] (h) -- (hi);
            \draw[line width=\linewidth, black] (i) -- (hi);
            \node (hij) at (11*2, 5) {\color{Purple}5};
            \draw[line width=\linewidth, black] (hi) -- (hij);
            \draw[line width=\linewidth, black] (j) -- (hij);
        
            \node (abcde) at (3*2, 9) {\color{Purple}9};
            \draw[line width=\linewidth, black] (abc) -- (abcde);
            \draw[line width=\linewidth, black] (de) -- (abcde);
            \node (fghij) at (10*2, 10) {\color{Purple}10};
            \draw[line width=\linewidth, black] (fg) -- (fghij);
            \draw[line width=\linewidth, black] (hij) -- (fghij);
            
            \node (root) at (6*2, 14) {\color{Purple}14};
            \draw[line width=\linewidth, black] (abcde) -- (root);
            \draw[line width=\linewidth, black] (fghij) -- (root);
        
            \draw[line width=\linewidth, dash pattern={on \dashon pt off \dashoff pt}, blue] (-1*2, 4) -- (13*2, 4) node [right] {$\lambda = 4$};
            \draw[line width=\linewidth, dash pattern={on \dashon pt off \dashoff pt}, red] (-1*2, 8) -- (13*2, 8) node [right] {$\lambda = 8$};
            \draw[line width=\linewidth, dash pattern={on \dashon pt off \dashoff pt}, ForestGreen] (-1*2, 12) -- (13*2, 12) node [right] {$\lambda = 12$};
        \end{tikzpicture}            
        \end{center}
          \caption{Points and clusters at three different levels of granularity (top) and the corresponding dendrogram (bottom).}
          \label{fig:combined-ultrametrics}
\end{figure}

\paragraph{Distortion ratio.}
To quantify how well an ultrametric $\ultrametric$ preserves the distances of a metric $\metric$, we use the standard concept of \emph{distortion}. 
Given a metric space $(X, \metric)$ and an ultrametric $\ultrametric$ on $X$ such that $\metric(x, y) \leq \ultrametric(x, y)$ for all $(x, y) \in X^2$, the distortion of $\ultrametric$, denoted by $\distortionratio$,\footnote{Similarly, one can define, for any $p \in [1,\infty[$, $\textsc{dist}_p$ as the $p$-norm of the vector $\left(\frac{\ultrametric(u, v)}{\metric(u, v)}\right)_{(u, v) \in X^2}$. The parameter $p$ has a significant influence on the complexity of the problem. For example, when $p=2$, the best ultrametric fit problem becomes NP-hard, while for $p=\infty$ the problem can be solved in polynomial time.}
is the maximum ratio between the Euclidean distance and the ultrametric distance, i.e.,
\[
\distortionratio = \max_{(x, y) \in X^2, x \neq y} \frac{\ultrametric(x, y)}{\metric(x, y)}.
\]

\paragraph{Best ultrametric fit.}
As defined by \citet{FKW95}, the best ultrametric fit problem ($\BUF$) consists in finding, given a metric space $(X, \metric)$, an ultrametric $\ultrametric$ on $X$ that preserves the distances as well as possible. Formally, the problem is to find an ultrametric $\ultrametric$ such that:
\begin{itemize}
    \item $\metric(x, y) \leq \ultrametric(x, y) $ for all $(x, y) \in X^2$,
    \item the distortion ratio $\distortionratio(\metric, \ultrametric)$ is minimized.
\end{itemize}
We denote by $\distortionratio^*$ the (unique) distortion ratio of an optimal solution.

$\ultrametric$ is a \emph{$c$-approximation} of the best ultrametric fit if the distortion of $\ultrametric$ is at most $c$ times the distortion of the optimal ultrametric, or equivalently if for any pair of points $x, y \in X$, we have:
\[
\metric(x, y) \leq \ultrametric(x,y) \leq c \cdot \distortionratio^* \cdot \metric(x, y).
\]
The scalar $c$ is called the approximation factor of $\ultrametric$.
By extension, an algorithm is a $c$-approximation algorithm of $\BUF$ if it outputs an ultrametric which is a $c$-approximation of the optimal ultrametric embedding.

\paragraph{Working in Euclidean spaces.} 
From now on, and as in \cite{CKL20} and \cite{CDL21}, we consider the case where $X$ is a set of $n$ points in a Euclidean space $\mathbb{R}^d$ equipped with the Euclidean metric $\euclideanmetric$, which is one of the most natural settings for many applications in data analysis and unsupervised learning. Note that by the Johnson-Lindenstrauss lemma~\cite{johnson1986extensions}, the dimension $d$ can always be reduced to $O(\log n)$ while preserving the distances between pairs of points up to a multiplicative factor of $(1+\epsilon)$ for any fixed $\epsilon > 0$.

\section{High level algorithm from \cite{CKL20}}\label{sec:high-level-algo}

We build our approximation algorithm using the framework of \citet{CKL20}, who provide a way to compute a $5\cdot \gamma$-approximation of $\BUF$ in time $O(nd+ n^{1+12/\gamma^2})$. This result was later improved by \citet{CDL21} who provide, for any fixed $\epsilon > 0$, a \((\sqrt{2} + \epsilon) \cdot \gamma\)-approximation for the same asymptotic running time.
We briefly recall the main ideas of their approach and pinpoint where we improve upon it. We first need a few more definitions.

\begin{definition}\label{def:gammakt}
    Let $G =(V,E,w)$ be a weighted graph and let $\gamma \ge 1$.
    A spanning tree $T = (V, E_T)$ of $G$ is a $\gamma$-Kruskal tree (or $\gamma$-KT for short) of $G$ if
    for every edge $e\in E\setminus E_T$, we have
    \[w(e) \ge \frac{1}{\gamma} \max_{e'\in P_T(e)} w(e'),\]
    where $P_T(e)$ is the unique path in $T$ from one endpoint of $e$ to the other.
\end{definition}

Let $T$ be a spanning tree of the complete graph induced by the set of points $X$. For an edge $e = (x, y) \in T$, we denote by $L(e)$ (respectively $R(e)$) the connected component of $T \setminus \{e' \in T \mid \euclideanmetric(e') \leq \euclideanmetric(e)\}$ that contains $x$ (respectively $y$).

\begin{definition}\label{def:cw}
    The \emph{cut weight} $\cutw{e}$ of an edge $e$ is defined as the maximal distance between a point in $L(e)$ and a point in $R(e)$:
    \[\cutw{e} = \max_{x \in L(e), y \in R(e)} \euclideanmetric(x, y).\]
\end{definition}
The function $\CutW: e\mapsto \cutw{e}$ for $e\in E_T$ is referred to as the \emph{cut weights of $T$}. See \cref{fig:cw}.

\begin{figure}
\begin{center}
    \tikzmath{
      \xscale=0.6;
      \yscale=0.6;
      \lwidth=0.4pt;
      \isize=1.6pt;
      \dashon = 10 * 0.25; 
      \dashoff = 10*0.25; 
      \sX=10;
      \eX=11;
    }
    \begin{tikzpicture}
        \node (u) at (0*\xscale, 0*\yscale) {$u$};
        \node (v) at (2*\xscale, 0*\yscale) {$v$};
        \node[draw, circle, inner sep=\isize, fill=ForestGreen!50] (x1) at (-2*\xscale, -.2*\yscale) {};
        \node[draw, circle, inner sep=\isize, fill=ForestGreen!50] (x2) at (-1*\xscale, -.9*\yscale) {};
        \node[draw, circle, inner sep=\isize, fill=ForestGreen!50] (x3) at (-2.5*\xscale, .5*\yscale) {};
        \node[draw, circle, inner sep=\isize, fill=ForestGreen!50] (x4) at (-2.7*\xscale, -.9*\yscale) {};
        \node[draw, circle, inner sep=\isize, fill=ForestGreen!50] (x5) at (-4*\xscale, -1*\yscale) {};
        \node[draw, circle, inner sep=\isize, fill=ForestGreen!50] (x6) at (-2.3*\xscale, -1.8*\yscale) {};

        \draw[line width=\lwidth, Purple] (u) -- (v) node[midway, above] {$e$};
        \draw[line width=\lwidth, black] (u) -- (x1);
        \draw[line width=\lwidth, black] (x1) -- (x2);
        \draw[line width=\lwidth, black] (x1) -- (x3);
        \draw[line width=\lwidth, black] (x1) -- (x4);
        \draw[line width=\lwidth, black] (x4) -- (x5);
        \draw[line width=\lwidth, black] (x4) -- (x6);
        
        \draw[line width=\lwidth, red] (x3) -- ++(0.5*\xscale, 1*\yscale) node[draw, circle, black, inner sep=\isize, fill=black!40] {};
        \draw[line width=\lwidth, red] (x3) -- ++(-0.5*\xscale, 1*\yscale) node[draw, circle, black, inner sep=\isize, fill=black!40] {};
        \draw[line width=\lwidth, red] (x2) -- ++(1*\xscale, -0.2*\yscale) node[draw, circle, black, inner sep=\isize, fill=black!40] {};
        \draw[line width=\lwidth, red] (x2) -- ++(0.5*\xscale, -1*\yscale) node[draw, circle, black, inner sep=\isize, fill=black!40] {};
        \draw[line width=\lwidth, red] (x5) -- ++(-1*\xscale, 0.5*\yscale) node[draw, circle, black, inner sep=\isize, fill=black!40] {};
        \draw[line width=\lwidth, red] (x5) -- ++(-1*\xscale, -0.5*\yscale) node[draw, circle, black, inner sep=\isize, fill=black!40] {};
        \draw[line width=\lwidth, red] (x6) -- ++(0.5*\xscale, -1*\yscale) node[draw, circle, black, inner sep=\isize, fill=black!40] {};
        \draw[line width=\lwidth, red] (x6) -- ++(-0.5*\xscale, -1*\yscale) node[draw, circle, black, inner sep=\isize, fill=black!40] {};

        \draw[line width=\lwidth, dash pattern={on \dashon pt off \dashoff pt}, blue] (\sX*\xscale, 0*\yscale) -- (\eX*\xscale, 0*\yscale) node [right] {$R(e)$};
        \draw[line width=\lwidth, dash pattern={on \dashon pt off \dashoff pt}, ForestGreen] (\sX*\xscale, 1*\yscale) -- (\eX*\xscale, 1*\yscale) node [right] {$L(e)$};
        \draw[line width=\lwidth, red] (\sX*\xscale, -1*\yscale) -- (\eX*\xscale, -1*\yscale) node [right] {$\color{black}{\color{red} e'}: w({\color{red} e'}) \ge w({\color{Purple}e})$};

        \draw[line width=\lwidth, dash pattern={on \dashon pt off \dashoff pt}, ForestGreen] plot [smooth cycle, tension=.8] coordinates {(-.5*\xscale, -1.2*\yscale) (0.5*\xscale, 0.3*\yscale) (-2.5*\xscale, 1*\yscale) (-4.5*\xscale, -1*\yscale) (-2.3*\xscale, -2.3*\yscale)};

        \node[draw, circle, inner sep=\isize, fill=blue!50] (y1) at (3.5*\xscale,  .7*\yscale) {};
        \node[draw, circle, inner sep=\isize, fill=blue!50] (y2) at (3.5*\xscale, -.7*\yscale) {};
        \node[draw, circle, inner sep=\isize, fill=blue!50] (y3) at (4.4*\xscale, 1.7*\yscale) {};
        \node[draw, circle, inner sep=\isize, fill=blue!50] (y4) at (4.7*\xscale, 1.0*\yscale) {};
        \node[draw, circle, inner sep=\isize, fill=blue!50] (y5) at (4.5*\xscale, -.7*\yscale) {};
        \node[draw, circle, inner sep=\isize, fill=blue!50] (y6) at (5.5*\xscale,  -0.2*\yscale) {};
        \node[draw, circle, inner sep=\isize, fill=blue!50] (y7) at (5.5*\xscale, -1.2*\yscale) {};          
        \draw[line width=\lwidth, black] (v) -- (y1);
        \draw[line width=\lwidth, black] (v) -- (y2);
        \draw[line width=\lwidth, black] (y1) -- (y3);
        \draw[line width=\lwidth, black] (y1) -- (y4);
        \draw[line width=\lwidth, black] (y2) -- (y5);
        \draw[line width=\lwidth, black] (y5) -- (y6);
        \draw[line width=\lwidth, black] (y5) -- (y7);

        \draw[line width=\lwidth, red] (y1) -- ++(-0.5*\xscale, 1*\yscale)  node[draw, circle, black, inner sep=\isize, fill=black!40] {};
        \draw[line width=\lwidth, red] (y3) -- ++(0.5*\xscale, 1*\yscale)  node[draw, circle, black, inner sep=\isize, fill=black!40] {};
        \draw[line width=\lwidth, red] (y3) -- ++(1.0*\xscale, .5*\yscale)  node[draw, circle, black, inner sep=\isize, fill=black!40] {};
        \draw[line width=\lwidth, red] (y4) -- ++(1*\xscale, 0.2*\yscale)  node[draw, circle, black, inner sep=\isize, fill=black!40] {};
        \draw[line width=\lwidth, red] (y2) -- ++(-0.7*\xscale, -1*\yscale)  node[draw, circle, black, inner sep=\isize, fill=black!40] {};
        \draw[line width=\lwidth, red] (y2) -- ++(0.4*\xscale, -1*\yscale)  node[draw, circle, black, inner sep=\isize, fill=black!40] {};
        \draw[line width=\lwidth, red] (y7) -- ++(0.9*\xscale,  0.4*\yscale)  node[draw, circle, black, inner sep=\isize, fill=black!40] {};
        \draw[line width=\lwidth, red] (y7) -- ++(0.4*\xscale, -0.8*\yscale)  node[draw, circle, black, inner sep=\isize, fill=black!40] {};
        \draw[line width=\lwidth, red] (y6) -- ++(0.8*\xscale, 0.7*\yscale)  node[draw, circle, black, inner sep=\isize, fill=black!40] {};
        \draw[line width=\lwidth, dash pattern={on \dashon pt off \dashoff pt}, blue] plot [smooth cycle, tension=.8] coordinates {(1.5*\xscale, -.2*\yscale) (3.5*\xscale, 1.2*\yscale) (4.6*\xscale, 1.9*\yscale) (5.8*\xscale, -1.5*\yscale)};
    \end{tikzpicture}            
    \end{center}
    \caption{Illustration of the connected components defined by an edge $e$. The cut weights is the maximal distance between a point in $L(e)$ and a point in $R(e)$.}
    \label{fig:cw}
\end{figure}
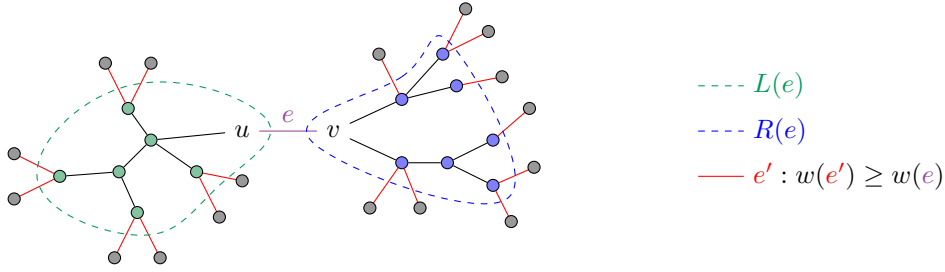
We say that $\ACW$ is an $\alpha$-approximation of the cut weights of $T$ if for every edge $e$ of $T$, we have
\[\cutw{e} \leq \acutw{e} \leq \alpha \cdot \cutw{e}.\]

Given a spanning tree $T$ and a edge-weight function $w: E_T \rightarrow \posreals$, the \emph{Cartesian tree} of $T$ with respect to $w$ is a weighted binary tree defined inductively as follows:
\begin{itemize}
    \item If $T$ is a single node, then the Cartesian tree is a single node with weight 0.
    \item Otherwise, the root of the Cartesian tree corresponds to the edge $e$ of $T$ with the largest weight, and the left and right children are the Cartesian trees of the two connected components of $T \setminus \{e\}$.
\end{itemize}

\citet{CKL20} provide an high-level algorithm that achieves a \(\gamma \cdot \alpha\)-approximation of $\BUF$ by computing a $\gamma$-KT and its $\alpha$-approximate cut weights. This algorithm is inspired by the one presented by \citet{FKW95}, who provide a quadratic time algorithm for computing the best ultrametric embedding of the data. The main difference is that in \cite{CKL20}, the first two steps are approximated rather than computed exactly. We outline this algorithm in \cref{algo:highest-level-approx}.

\begin{algorithm}[htbp]
    \caption{$\gamma \cdot \alpha$-approx. of the best ultrametric fit}
    \label{algo:highest-level-approx}
    \SetAlgoLined
    \KwIn{$X \subseteq \mathbb{R}^d$: set of points}
    \KwOut{A $\gamma \cdot \alpha$-approximation of \BUF}
    $T\gets \gamma$-KT $T$ of the complete graph induced by $X$\\
    $\ACW\gets \alpha$-approximation of the cut weights of  $T$\\
    $C_T \gets $ Cartesian tree of $T$ w.r.t. $\ACW$\\
    \Return{the ultrametric induced by $C_T$}
\end{algorithm}

\begin{theorem}[{\cite[Theorem 3.1]{CKL20}}]
    \label{thm:main-approx}
    For any $\gamma \geq 1$ and $\alpha \geq 1$, the output of Algorithm~\ref{algo:highest-level-approx} is a $\gamma \cdot \alpha$-approximation of the best ultrametric fit.
\end{theorem}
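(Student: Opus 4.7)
My plan is to leverage the identity
\[
\ultrametric_{C_T}(x, y) \;=\; \max_{e \in P_T(x, y)} \acutw{e},
\]
which is immediate from the Cartesian tree construction: the $\lca$ of $x$ and $y$ in $C_T$ corresponds exactly to the edge of largest $\ACW$-weight on the unique $T$-path from $x$ to $y$. The theorem then reduces to sandwiching this maximum between $\ell_2(x, y)$ and $\gamma \alpha \distortionratio^* \ell_2(x, y)$.

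For the lower bound, I would select a suitable extremal-Euclidean-length edge $e^\star$ on $P_T(x, y)$ and argue that $x \in L(e^\star)$ and $y \in R(e^\star)$. The point is that the sub-paths of $P_T(x, y)$ from $x$ to the endpoint of $e^\star$ on the $L$-side, and from $y$ to the endpoint on the $R$-side, consist of edges that survive the deletion used in the definition of $L(e^\star)$ and $R(e^\star)$, so both endpoints remain connected to the corresponding component. Then
\[
\ell_2(x, y) \;\le\; \cutw{e^\star} \;\le\; \acutw{e^\star} \;\le\; \max_{e \in P_T(x, y)} \acutw{e}
\]
follows directly from the definition of $\cutw{e^\star}$ as a maximum over $L(e^\star) \times R(e^\star)$ and from the $\alpha$-approximation property.

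For the upper bound, the key intermediate step is to show $\cutw{e} \le \distortionratio^* \ell_2(e)$ for every tree edge $e = (u, v)$. Let $\ultrametric^\star$ denote an optimal ultrametric of distortion $\distortionratio^*$, and take $a \in L(e), b \in R(e)$. The ultrametric inequality applied to $\ultrametric^\star$ along the composite path $a \to u \to v \to b$ in $T$ gives
\[
\ultrametric^\star(a, b) \;\le\; \max\bigl(\ultrametric^\star(a, u),\, \ultrametric^\star(u, v),\, \ultrametric^\star(v, b)\bigr).
\]
The middle term is bounded by $\distortionratio^* \ell_2(e)$ directly. For the outer terms, the crucial observation is that $L(e)$ (resp.\ $R(e)$) consists of vertices connected to $u$ (resp.\ $v$) through tree edges whose Euclidean lengths are controlled by $\ell_2(e)$; iterating the ultrametric inequality along $P_T(a, u)$ and $P_T(v, b)$ and using $\ultrametric^\star(w, w') \le \distortionratio^* \ell_2(w, w')$ on each edge then bounds both $\ultrametric^\star(a, u)$ and $\ultrametric^\star(v, b)$ by $\distortionratio^* \ell_2(e)$. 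Combining $\cutw{e} \le \distortionratio^* \ell_2(e)$ with the $\gamma$-KT property $\ell_2(e) \le \gamma \ell_2(x, y)$ for every $e \in P_T(x, y)$ and with $\acutw{e} \le \alpha \cutw{e}$ yields $\max_{e \in P_T(x, y)} \acutw{e} \le \gamma \alpha \distortionratio^* \ell_2(x, y)$, which is the upper bound.

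The step I expect to be most delicate is the intermediate inequality $\cutw{e} \le \distortionratio^* \ell_2(e)$: it depends sensitively on the precise structure of $L(e)$ and $R(e)$, since the argument requires that the sub-paths entering the iterated ultrametric inequality consist of edges whose lengths are genuinely controlled by $\ell_2(e)$ and do not blow up the bound. Once this is secured, the rest of the proof is a clean combination of the $\gamma$-KT guarantee and the $\alpha$-approximation of the cut weights.
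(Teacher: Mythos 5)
The paper does not prove this theorem; it is cited verbatim from \cite[Theorem~3.1]{CKL20}, so there is no internal proof to compare your proposal against. Judged on its own, your argument is the standard (and correct) one: the Cartesian tree identity $\ultrametric_{C_T}(x,y) = \max_{e\in P_T(x,y)} \acutw{e}$, a lower bound by exhibiting an edge $e^\star\in P_T(x,y)$ with $x\in L(e^\star)$, $y\in R(e^\star)$, and an upper bound via $\cutw{e}\le\distortionratio^*\,\ell_2(e)$ (from the iterated ultrametric inequality along the two sub-paths and across $e$), combined with the $\gamma$-KT bound $\max_{e\in P_T(x,y)}\ell_2(e)\le\gamma\,\ell_2(x,y)$ and $\acutw{e}\le\alpha\cutw{e}$.

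Two points worth making precise. First, be explicit that your ``extremal'' edge $e^\star$ is the one of \emph{maximum} Euclidean length on $P_T(x,y)$: then the sub-paths from $x$ and $y$ to its endpoints use edges strictly shorter than $\ell_2(e^\star)$, which is what keeps them inside $L(e^\star)$, $R(e^\star)$. Second, your proof implicitly reads the definition of $L(e)$, $R(e)$ as the components obtained after deleting all edges of weight \emph{at least} $\ell_2(e)$ (so that each component is spanned by edges shorter than $\ell_2(e)$); this is consistent with \cref{fig:cw} and with the construction of \citet{FKW95}. The paper's \cref{def:cw} as written deletes edges of weight \emph{at most} $\ell_2(e)$, which appears to be a typo: under that literal reading a point $a\in L(e)$ is reached from $u$ through arbitrarily long edges, the inequality $\cutw{e}\le\distortionratio^*\ell_2(e)$ fails, and so would the theorem. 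Your ``most delicate step'' warning is exactly where this sign matters; stating the required orientation of the cut explicitly would close the gap.
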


Furthermore step 3 of Algorithm~\ref{algo:highest-level-approx} can be easily computed in $O(n \log n)$ time using a disjoint-set data structure \citep{FKW95}. 
Hence, if there are algorithms that compute a $\gamma$-KT in time $f(n, d)$, and an $\alpha$-approximation of the cut weights in time $g(n, d)$, then there is an algorithm that computes a $\gamma \cdot \alpha$-approximation of the best ultrametric fit in time $\cO(n \log n) + f(n, d) + g(n, d)$.
In this work, we show that there are algorithms with $f(n,d) = \cOtilde(nd + n^{1+1/\gamma^2})$ and $g(n,d) = \cOtilde(nd + n^{1+1/\alpha^2})$. As mentioned in the preliminary section, we can assume w.l.o.g. $d = \cO(\log n)$, hence this term is asymptotically dominated by the other in both cases, and we drop it from now on.

\paragraph{Complexity of computing a $\gamma$-KT.}
\citet{CKL20} present an algorithm for computing a \(\gamma\)-KT in time \(\cOtilde\left(n^{1+O(1/\gamma^2)}\right)\).
This is achieved by constructing a sparse \(\gamma\)-spanner using the LSH-based algorithm of \citet{har2013euclidean} and then computing a minimum spanning tree of the \(\gamma\)-spanner.
However, as explained by \cite{andoni2023sub},
the \(O(1/\gamma^2)\) term in the exponent contains a factor of $12$, 
so this algorithm is only faster than the exact quadratic algorithm for \(\gamma > \sqrt{12}\).
Furthermore, \citet{andoni2023sub} prove that \(\gamma\)-spanners cannot\footnote{Unless extra nodes, called non-metric Steiner points, are added to the graph.} be constructed in subquadratic time for \(\gamma < \sqrt{2}\), thus making spanner-based approaches inefficient for building \(\gamma\)-KTs when \(\gamma < \sqrt{2}\). We address this limitation by introducing a new algorithm that computes a $\gamma$-KT in time \(\cOtilde\left(n^{1+1/\gamma^2}\right)\) without relying on spanners.

\paragraph{Complexity of approximating the cut weights.}
\begin{itemize}
    \item \citet{CKL20} provide an algorithm that computes a 5-approximation of the cut weights in time $O(nd + n \log n)$.
    \item \citet{CDL21} improve the previous result by using a data structure called coresets to provide an algorithm that computes a $(\sqrt{2} + \epsilon)$-approximation of the cut weights in time
    \[
    O\left( n \cdot d \cdot \left( \frac{\log \frac{1}{\varepsilon}}{\varepsilon^2} + \frac{\log n}{\varepsilon} \right) + n \cdot \frac{1}{\varepsilon^{4.5}} \log \frac{1}{\varepsilon} \right).
    \]
\end{itemize}

In both cases, the total asymptotic running time to compute the ultrametric is $\cOtilde(nd + n^{1+12/\gamma^2})$, dominated by the time complexity of the algorithm that computes a $\gamma$-KT (or more precisely, a $\gamma$-spanner in these algorithms). This observation indicates that there is room to improve the approximation factor of the cut weights without increasing the total running time of the algorithm. However, the methods presented in \cite{CKL20} and \cite{CDL21} cannot leverage this observation due to their inherent (geometric) bottlenecks, with approximation factors of 5 and $\sqrt{2}$ for computing the cut weights, respectively.

\paragraph{Our Contributions}

The main contributions of this paper are as follows: first, we restore the claim of \citet{CKL20} by providing an algorithm that, for any $\gamma \geq 1$, computes a $\gamma$-KT in time $\cOtilde(n^{1+1/\gamma^2} \log \delta)$ (\cref{thm:gammakt}), where $\delta$ is the \textit{spread} of the input space $X$, defined as the ratio between the diameter and the minimum pairwise distance of $X$. Second, we present an algorithm that computes an \(\alpha\)-approximation of the cut weights in time \(\cOtilde(n^{1+1/\alpha^2})\) for any \(\alpha \geq 1\) (\cref{thm:cw}).

\begin{restatable}{theorem}{gammakt}\label{thm:gammakt}
    For any $\gamma > 1$, there is an algorithm that computes a $\gamma$-KT of a given $n$-points Euclidean space $X$ in time and space $\cOtilde(n^{1+1/\gamma^2} \log \delta)$, where $\delta$ is the spread of $X$.
\end{restatable}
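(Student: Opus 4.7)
The plan is to bypass the $\gamma$-spanner bottleneck of \cite{CKL20} by constructing the $\gamma$-KT directly through a scale-based Kruskal-style algorithm fed by Locality-Sensitive Hashing for approximate close pairs. The primitive I would rely on is the standard Euclidean LSH of Andoni and Indyk: for any $c > 1$, after $\tilde{O}(n^{1+1/c^2})$ preprocessing time and space, one can enumerate, in time $\tilde{O}(n^{1+1/c^2})$, a list of candidate pairs that with high probability contains every pair of input points at distance at most $R$ and contains no pair at distance strictly greater than $cR$.

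After rescaling so that the minimum pairwise distance is $1$, I would pick a small parameter $\epsilon = \Theta(1/\log n)$, set $s = 1+\epsilon$ and $c = \gamma/s > 1$, and define geometric scales $r_i = s^i$ for $i = 0, 1, \ldots, K$ with $K = O(\log_s \delta) = \tilde{O}(\log \delta)$. The algorithm maintains a disjoint-set forest initialized to $n$ singletons and processes the scales in increasing order. At scale $i$, it builds the LSH structure with parameters $(R, c) = (r_i, c)$, enumerates the reported close pairs, and, for each such pair $(u,v)$, adds the edge $(u,v)$ to the forest whenever the disjoint-set structure reports $u$ and $v$ in different components. The output is this spanning tree.

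For correctness, consider any non-tree edge $e = (u,v)$ of weight $w$, and let $i^*$ be the smallest index with $r_{i^*} \ge w$, so that $w \le r_{i^*} < sw$. Since $\|u-v\| = w \le r_{i^*}$, LSH enumerates $(u,v)$ in round $i^*$ with high probability; because $e$ is a non-tree edge, $u$ and $v$ already share a component when the pair is inspected. The $u$-to-$v$ path in the output tree therefore consists of edges added in rounds $j \le i^*$, and each such edge has endpoints at LSH-certified distance at most $c r_j \le c r_{i^*} = \gamma r_{i^*}/s < \gamma w$. This is exactly the $\gamma$-KT inequality. Summing the per-scale cost $\tilde{O}(n^{1+1/c^2}) = \tilde{O}(n^{1+s^2/\gamma^2}) = \tilde{O}(n^{1+1/\gamma^2})$, using that $s^2 = 1 + O(\epsilon)$ and hence $n^{O(1/\log n)} = O(1)$, over the $\tilde{O}(\log \delta)$ scales yields the claimed time and space $\tilde{O}(n^{1+1/\gamma^2} \log \delta)$, since only one LSH structure need be alive at a time.

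The main obstacle is the delicate parameter balancing: the scale ratio $s$ and the LSH approximation $c$ must multiply exactly to $\gamma$ so that edges added at any scale respect the $\gamma$-KT bound with no extra multiplicative slack, while $c$ has to remain close enough to $\gamma$ for $n^{1/c^2}$ to collapse into $n^{1/\gamma^2}\cdot n^{o(1)}$. The remaining steps, namely the union bound over the $O(n^2)$ pairs for LSH success and the amortization of disjoint-set operations, are standard.
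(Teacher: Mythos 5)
Your high-level plan -- scale-based geometric rounds with LSH of quality $c = \gamma/s$, edges added in Kruskal fashion via union-find, and the inequality $c \cdot s = \gamma$ closing the $\gamma$-KT bound -- mirrors the structure of the paper's proof (Proposition~\ref{prop:algo-property-star} plus Lemma~\ref{lemma:aux:kt}), and your correctness argument for the non-tree edge $(u,v)$ is sound. However, the LSH primitive you build on does not exist in the stated time bound, and this is not a technicality: you assume that one can enumerate, in time $\tilde{O}(n^{1+1/c^2})$, a list that \emph{contains every pair at distance at most $R$}. The number of such pairs can be $\Theta(n^2)$ (e.g., all $n$ points inside a ball of radius $R/2$), so any such list has $\Theta(n^2)$ entries and no algorithm can enumerate it in subquadratic time. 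Filtering the enumerated pairs with a disjoint-set check does not save you, because you must still pay at least one operation per enumerated pair; a single dense bucket with $k$ points forces $\Omega(k^2)$ work.

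This is precisely the obstacle that motivates the paper's Local-BFS subroutine (Algorithm~\ref{algo:kt-bfs}), and it is where the two proofs genuinely diverge. Rather than enumerating all close pairs, the paper runs a breadth-first traversal \emph{inside each LSH bucket} and adds only edges that reach a not-yet-visited point, removing that point from the active set as soon as it is touched. The per-bucket cost is then $O(|B| + \mathrm{collisions}(B))$: each distance comparison either removes a vertex from $S$ (at most $|B|$ such events) or hits a ``far collision'' -- a pair in the same bucket at distance $> \gamma R$ -- and the LSH far-guarantee bounds the expected number of far collisions across all buckets by $O(n)$. This is what keeps one LSH round at $\tilde{O}(n)$ even when near pairs are quadratically abundant, and it produces a set $E$ of only $\tilde{O}(n^{1+1/\gamma^2}\log\delta)$ edges satisfying property (*), on which ordinary Kruskal is then run. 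To fix your proof you would need to replace the black-box ``enumerate all near pairs'' step with an explicit bucket-local spanning procedure of this type; the rest of your argument (geometric scales, the union bound, and the $\gamma$-KT inequality) would then go through essentially as written.
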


\begin{restatable}{theorem}{thmcw}\label{thm:cw}
    For any $\alpha > 1$, there is an algorithm 
    that computes a $\alpha$-approximation of the cut weights of an $n$-nodes tree $T$ in time and space $\cOtilde(n^{1+1/\alpha^2})$.
\end{restatable}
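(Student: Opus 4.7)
The plan is to reduce the problem to many approximate farthest-neighbor queries performed along a Kruskal-style sweep. I process the edges of $T$ in increasing order of weight: just before edge $e$ is processed, the two connected components about to be joined are exactly $L(e)$ and $R(e)$. For each current component $C$ I maintain a dynamic approximate farthest-neighbor (AFN) data structure on the points of $C$, and I use it to compute $\acutw{e}$ at the moment of each merge.

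Concretely, when processing an edge $e$ that merges components $L$ and $R$ with $|L| \le |R|$, I issue one AFN query per point of $L$ against the structure of $R$: for each $x \in L$ this returns some $\tilde y(x) \in R$ with $\enorm{x - \tilde y(x)} \ge \enorm{x - y^*(x)}/\alpha$, where $y^*(x)$ is a true farthest point of $x$ in $R$. I then set $\acutw{e} = \alpha \cdot \max_{x \in L} \enorm{x - \tilde y(x)}$. Let $(x^\circ, y^\circ) \in L \times R$ realize $\cutw{e}$; plugging $x = x^\circ$ into the query gives $\enorm{x^\circ - \tilde y(x^\circ)} \ge \enorm{x^\circ - y^\circ}/\alpha = \cutw{e}/\alpha$, hence $\acutw{e} \ge \cutw{e}$. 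The reverse inequality $\acutw{e} \le \alpha \cdot \cutw{e}$ is immediate since every queried distance is a distance between a point of $L$ and a point of $R$. Once the estimate is recorded, I merge the two AFN structures by inserting the points of the smaller component one by one into the structure of the larger one.

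For the dynamic AFN, I start from the static $\alpha$-approximate farthest-neighbor structure of~\cite{pagh2017approximate}, which on $m$ points has preprocessing time and space $\cOtilde(m^{1+1/\alpha^2})$ and query time $\cOtilde(m^{1/\alpha^2})$, and plug it into the Bentley--Saxe logarithmic method. Farthest-neighbor is a decomposable search problem---the farthest point over a disjoint union is the farther of the two partial answers, and taking the maximum of $\alpha$-approximations is still an $\alpha$-approximation---so this yields amortized insertion time $\cOtilde(m^{1/\alpha^2})$ and query time $\cOtilde(m^{1/\alpha^2})$. Applying the classical smaller-to-larger bookkeeping---each of the $n$ points belongs to the smaller side of at most $\log_2 n$ merges---the total number of AFN insertions and the total number of AFN queries performed over the whole algorithm are each $O(n \log n)$, yielding overall time and space $\cOtilde(n^{1+1/\alpha^2})$.

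The main obstacle is the dynamization step: although the Kruskal sweep and the smaller-to-larger trick are standard, the claimed exponent is only achieved because the static preprocessing cost $\cOtilde(m^{1+1/\alpha^2})$ of Pagh's AFN amortizes to exactly $\cOtilde(m^{1/\alpha^2})$ per insertion under Bentley--Saxe, which matches the per-query cost. Care is also needed to ensure that the approximation ratio composes across the $O(\log m)$ buckets of the decomposition (immediate, since one takes the best per-bucket answer) and that the ambient dimension $d$ does not appear in the final bound: this is handled by a preliminary Johnson--Lindenstrauss projection to $d = O(\log n)$, so that the dependence on $d$ folds into $\cOtilde(\cdot)$.
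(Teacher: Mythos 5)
Your proposal is correct and establishes the same bound, but it uses a genuinely different mechanism for dynamizing the approximate farthest-neighbor (AFN) structure. The Kruskal-style sweep, the smaller-to-larger enumeration, the multiplication by $\alpha$, the two inequalities showing $\CutW(e) \le \ACW(e) \le \alpha \cdot \CutW(e)$, and the observation that each point is on the smaller side at most $\log_2 n$ times all match the paper's argument (cf.~\cref{algo:cw}, \cref{lemma:algocw-complexity}, \cref{lemma:algocw-correct}). The divergence is in how you obtain a mergeable AFN data structure.

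The paper (\cref{thm:afn}) opens up the internals of \citet{pagh2017approximate}: it fixes one shared family of $L' = \cOtilde(n^{1/\alpha^2})$ Gaussian projection directions for the entire point set, keeps per-cluster balanced search trees of the $M = \cOtilde(n^{1/\alpha^2})$ most extremal points in each direction, and implements merge by moving the smaller cluster's entries into the larger cluster's trees and truncating to the top $M$. This is a white-box dynamization whose merge cost $\cOtilde(n^{1/\alpha^2}\cdot\min(|S|,|S'|,n^{1/\alpha^2}))$ carries an extra cap at $n^{1/\alpha^2}$ coming from truncation. You instead treat Pagh et al.\ as a black box and dynamize via the Bentley--Saxe logarithmic method, exploiting that (approximate) farthest neighbor is a decomposable search problem, and implement merge as $\min(|S|,|S'|)$ single-element insertions into the larger structure. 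Your accounting is correct: the static build cost $P(m)=\cOtilde(m^{1+1/\alpha^2})$ has exponent strictly greater than $1$, so the logarithmic-method rebuild cost telescopes to $\cOtilde(P(n_j))$ per structure, and summing $n_j^{1+1/\alpha^2}$ over all structures subject to $\sum_j n_j = \cO(n\log n)$ and $\max_j n_j \le n$ gives $\cOtilde(n^{1+1/\alpha^2})$; the query cost $\sum_i Q(2^i)$ is a geometric sum dominated by its top term. Your approach is more modular and would transparently benefit from any future improvement to the static AFN structure, while the paper's approach gives worst-case (not amortized) per-merge bounds and a slightly tighter merge cost via truncation; both yield the same final complexity.

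Two small points you should tighten. First, you never address success probability: Pagh et al.'s structure errs with constant probability per query, and after Bentley--Saxe decomposition a single $\textsc{Query}$ touches $O(\log n)$ buckets, while \cref{algo:cw} issues $\cO(n\log n)$ queries, so you need to amplify each static sub-structure to failure probability $\cO(1/n^3)$ (e.g.\ by taking $\Theta(\log n)$ independent copies as the paper does) before the union bound in \cref{lemma:algocw-correct} goes through. Second, the phrase ``amortized insertion time $\cOtilde(m^{1/\alpha^2})$'' is imprecise since $m$ changes per structure and per operation; the argument is cleanest stated as a single global charge of $\cOtilde(n^{1+1/\alpha^2})$ distributed over the $\cO(n\log n)$ insertions.
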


Finally, these two results can be combined to obtain a $c$-approximation of $\BUF$, as shown in the following corollary.

\begin{corollary}
    \label{cor:main}
    For any $c \geq 1$, there exists an algorithm that computes a $c$-approximation of $\BUF$ in time $\cOtilde(n^{1+1/c})$.
\end{corollary}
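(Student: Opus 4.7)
The plan is to derive the corollary as an immediate consequence of \cref{thm:main}, which already provides a $\gamma \cdot \alpha$-approximation in time $\cOtilde(n^{1+1/\gamma^2} + n^{1+1/\alpha^2})$. The only task is to choose $\gamma$ and $\alpha$ so that the product $\gamma \cdot \alpha$ equals the target approximation $c$ while simultaneously minimizing the exponent $\max(1/\gamma^2, 1/\alpha^2)$ in the running time.

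For a fixed value of the product $\gamma \cdot \alpha = c$, the quantity $\max(1/\gamma^2, 1/\alpha^2)$ is minimized precisely when $\gamma = \alpha$, by a straightforward balancing argument (any imbalance only inflates the larger of the two exponents). The natural and only choice is therefore $\gamma = \alpha = \sqrt{c}$. Assuming $c > 1$, this satisfies the hypotheses of \cref{thm:main} (both $\gamma \geq 1$ and $\alpha > 1$), produces a $\gamma \cdot \alpha = c$ approximation, and yields running time
\[
\cOtilde\pth{n^{1+1/\gamma^2} + n^{1+1/\alpha^2}} = \cOtilde\pth{n^{1+1/c}},
\]
as required.

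The degenerate case $c = 1$ needs a separate remark: here $\tilde{O}(n^{1+1/c}) = \tilde{O}(n^2)$ matches the quadratic lower bound of \citet{FKW95}, so one can simply invoke their exact algorithm to produce an optimal (hence $1$-approximate) ultrametric fit in the claimed budget. Thus the statement holds for all $c \geq 1$.

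There is no real obstacle: everything has been absorbed into \cref{thm:main} via the two subroutines of \cref{thm:gammakt} and \cref{thm:cw} plugged into \cref{algo:highest-level-approx}. The only subtlety worth highlighting is the choice $\gamma = \alpha = \sqrt{c}$, which is the unique way to equate the two exponents $1/\gamma^2$ and $1/\alpha^2$ under the constraint $\gamma\alpha = c$, and which shows that the bottleneck of the $\gamma$-KT construction and that of the cut-weight approximation are balanced exactly at the symmetric split.
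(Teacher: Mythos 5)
Your proof is correct and follows essentially the same route as the paper: set $\gamma = \alpha = \sqrt{c}$ and plug into the main theorem (equivalently \cref{thm:gammakt}, \cref{thm:cw}, and \cref{thm:main-approx}). You also sensibly flag the $c=1$ edge case, where $\gamma = \alpha = 1$ falls outside the strict hypotheses $\gamma, \alpha > 1$ and one falls back on the exact quadratic-time algorithm of \citet{FKW95}; the paper's one-line proof glosses over this but the fix you give is the intended one.
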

\begin{proof}
    Take $\gamma = \sqrt c$ and $\alpha = \sqrt c$ in \cref{thm:gammakt} and \cref{thm:cw}, respectively.
    By using \cref{thm:main-approx}, we obtain a $c$-approximation of $\BUF$, and the total running time is $\cOtilde(n^{1+1/c})$.
\end{proof}

For $c = (1 + \epsilon)$, this gives a $(1 + \epsilon)$-approximation of $\BUF$ in time $\cOtilde(n^{2 - \epsilon + o(\epsilon^2)})$, which remains subquadratic in $n$.
For comparison, when $c=1$ (no approximation), it is known that the best ultrametric fit problem cannot be solved in subquadratic time, see for example \cite{FKW95}.

\paragraph{High-level overview of the techniques.}
\begin{itemize}
    \item \textbf{Theorem~\ref{thm:gammakt}:} We achieve this result using a new method that avoids \(\gamma\)-spanner constructions. The algorithm operates essentially through a breadth-first traversal of the graph, guided by locality-sensitive hashing of the data.
    \item \textbf{Theorem~\ref{thm:cw}:} This algorithm is built on a dynamic version of the approximate farthest neighbor data structure developed by \citet{pagh2017approximate}, which supports queries in time \(\cOtilde(n^{1/\alpha^2})\) and space \(\cOtilde(n^{1+1/\alpha^2})\). We extend this data structure to work over a partition of the input space \(X\), allowing approximate farthest neighbor queries within \textit{clusters} (i.e., subsets of the partition) and efficient merging of clusters.
\end{itemize}

We now proceed to the details of these two algorithms.

\section{Algorithm for $\gamma$-Kruskal tree}\label{sec:kt}
In this section, we present the algorithm to compute a \(\gamma\)-KT, as stated in \cref{thm:gammakt}, along with its full analysis.

Recall that the algorithm by \citet{CKL20} for computing a \(\gamma\)-KT operates in two steps: first, it computes a sparse \(\gamma\)-spanner of the complete graph induced by the set of points \(X\), and then it computes a minimum spanning tree of the \(\gamma\)-spanner. To find a \(\gamma\)-spanner, they use the algorithm of \citet{har2013euclidean}, which runs in time \(\tilde{\mathcal{O}}(n^{1+12/\gamma^2})\) and uses Locality-Sensitive Hashing.
This algorithm is subquadratic only when \(\gamma < \sqrt{12}\). While it might be possible to reduce this constant, \citet{andoni2023sub} showed that spanners cannot be constructed in subquadratic time for \(\gamma < \sqrt{2}\).

Instead of relying on spanners, we propose an algorithm that builds a \(\gamma\)-KT for any \(\gamma \geq 1\) in time \(\cOtilde(n^{1+1/\gamma^2} \log \delta)\), essentially via a breadth-first traversal guided by LSH.

We first recall the definition of Locality-Sensitive Hash functions (LSH):
\begin{definition}
    Let $(X, \metric)$ be a metric space over $n$ points, and let $c > 1$ and $R > 0$.
    A family $\Hh$ of functions is a $(\rho, c, R)$-LSH if,
    when choosing a function $h$ uniformly at random from $\Hh$
    we have, for every $x,y \in X$:
    \begin{itemize}
        \item $\metric(x, y) \le R \Rightarrow \Prob{h(x) = h(y)} \ge 1/n^{\rho}$
        \item $\metric(x, y) \ge cR \Rightarrow \Prob{h(x) = h(y)} \le 1/n$
    \end{itemize}
\end{definition}

\citet{andoni2006near} showed that when $X$ is a Euclidean space, i.e.
$\metric$ is the $\euclideanmetric$ metric, then for every $c > 1$ and every $R >0$,
there exists a $(1/c^2, c , R)$-LSH family of functions $\Hh_{c,R}$,
and the evaluation of a random function from $\Hh_{c,R}$ on all $n$ points of $X$ takes $\cOtilde(n)$ time.

Given an LSH function $h$, the \textit{buckets} of $h$ are the equivalence classes of the relation $x \sim y \Leftrightarrow h(x) = h(y)$. Intuitively, points with the same hash value are put into the same bucket.

Finally, we introduce an important property, denoted (*), that will be useful in explaining the behavior of our algorithm.
\begin{definition}
    We say that a set $E$ of edges satisfies the property (*) if for any pair $(u,v) \in X^2$, there is a path from $u$ to $v$ in $E$ using only edges of weight at most $\gamma \cdot \euclideanmetric(u,v)$.
\end{definition}

\paragraph{Overview of the algorithm.} The algorithm works in two steps:
\begin{itemize}
    \item First we compute a set $E$ of $\cOtilde(n^{1+1/\gamma^2}\cdot \log \delta)$ edges that satisfies (*). This is the purpose of \cref{prop:algo-property-star}.
    \item Then, we compute a minimum spanning tree of $E$. \cref{lemma:aux:kt} shows that this tree is a \(\gamma\)-KT. This part is done in time $\cO(|E|\log |E|) = \cOtilde(n^{1+1/\gamma^2})$ using the standard algorithm of \citet{kruskal1956shortest}.
\end{itemize}

The above algorithm, combined with the following proposition and lemma, yields the main result of this section, a proof of \cref{thm:gammakt}.

\begin{restatable}{proposition}{algokt}\label{prop:algo-property-star}
    There is an algorithm that computes a set $E$ of $\cOtilde(n^{1+1/\gamma^2}\cdot \log \delta)$ edges that satisfies (*) in time $\cOtilde(n^{1+1/\gamma^2} \log \delta)$.
\end{restatable}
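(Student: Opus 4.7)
I would process $X$ at $\cO(\log \delta)$ geometrically spaced scales $R_i = (1+\epsilon)^i \cdot r_{\min}$ with $\epsilon = 1/\log n$, where $r_{\min}$ is a lower bound on the minimum pairwise distance in $X$. At each scale $R_i$, I draw $L = \tilde{\cO}(n^{1/\gamma^2})$ independent hash functions from a $(\rho',\gamma/(1+\epsilon),R_i)$-LSH family, as provided by Andoni--Indyk, with $\rho' = (1+\epsilon)^2/\gamma^2 = 1/\gamma^2 + \cO(1/\log n)$. By amplification, any pair $(u,v)$ with $\euclideanmetric(u,v)\leq R_i$ shares a bucket under at least one of these hash functions with probability $1-1/\mathrm{poly}(n)$, while any pair at distance $\geq \gamma R_i/(1+\epsilon)$ collides in a given hash function with probability at most $1/n$.

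The edges of $E$ are then collected through a breadth-first exploration of the collision buckets, using a single global union--find structure that persists across all scales and hash functions. Scales are processed in increasing order. At scale $R_i$, for each hash function $h_j$, for each bucket $B$, I keep one representative per union--find component whose points appear in $B$, and I test pairs of distinct representatives: if two representatives are within distance $\gamma R_i/(1+\epsilon)$, I add the corresponding edge to $E$ and union their components. Each edge of $E$ corresponds to a successful union, so the number of edges is at most $(n-1) + (\text{failed tests})$. Failed tests correspond to pairs at distance $>\gamma R_i/(1+\epsilon)$ whose two representatives collide in the same bucket; by the LSH upper bound $P_2\leq 1/n$, the expected number of such collisions is $\cO(n)$ per hash function per scale. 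Summing over $L\log\delta$ hash-function/scale combinations gives $|E| = \tilde{\cO}(n^{1+1/\gamma^2}\log\delta)$ and a matching $\tilde{\cO}(n^{1+1/\gamma^2}\log\delta)$ bound on the number of distance evaluations.

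To verify property (*), fix $(u,v)$ with $\euclideanmetric(u,v)=d$ and take the scale $R_i \in [d,(1+\epsilon)d]$. With high probability, $u$ and $v$ share a bucket of some hash function $h_j$ at scale $R_i$. When the algorithm processes this bucket, either $u$ and $v$ are already in the same union--find component (and then, by induction on the scales processed so far, there is a path in $E$ between them whose edges all have weight at most $\gamma R_i/(1+\epsilon) \leq \gamma d$), or their representatives are tested and found at distance $\leq R_i \leq \gamma R_i/(1+\epsilon)$, so an edge of weight $\leq \gamma d$ is added linking their components. In either case the path required by (*) exists.

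The main obstacle is implementing the ``pairs of distinct representatives in $B$'' step so that the total work is $\tilde{\cO}(nL)$ per scale, as bucket sizes can in principle be $\Omega(n)$. The plan is to maintain, per bucket, a dynamic list of the surviving representatives supporting $\cO(\log n)$ amortised deletions whenever two components merge, together with a careful enumeration that skips pairs already tested at a previous (hash, scale). The charging argument above---each distance computation is either productive (bounded by $n-1$ successful unions globally) or is a $P_2$-bounded bad collision (bounded by $\cO(n)$ per hash function per scale in expectation)---then gives the target running time $\tilde{\cO}(n^{1+1/\gamma^2}\log\delta)$.
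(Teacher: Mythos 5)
Your plan follows the paper's high-level outline at the scale-selection level (geometric grid of radii $R_i$ with $\tau = 1 + 1/\log n$ and per-scale amplification over $L = \cOtilde(n^{1/\gamma^2})$ LSH functions), but it deviates in how edges are extracted from a bucket, and this is where it breaks.

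The paper's \cref{algo:kt-bfs} works \emph{locally and statelessly}: for each bucket $B$ it scans pairs of \emph{points} of $B$ (via a BFS over $G_{\gamma R}\!\restriction_B$), with no memory across buckets, hash functions, or scales, and simply takes the union of all edges produced. Correctness is immediate because any $u, v \in B$ with $\euclideanmetric(u,v) \le R \le \gamma R$ are already adjacent in $G_{\gamma R}\!\restriction_B$; the BFS is there only to cap the work at $\cO(|B| + \text{collisions})$. Your scheme instead keeps a single global union--find and, inside a bucket, compares only \emph{one representative per component}. The step ``their representatives are tested and found at distance $\le R_i$'' is the gap: the representatives of $u$'s and $v$'s components are not $u$ and $v$ in general. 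If $u$'s component already contains another point $w$ (absorbed at an earlier, smaller scale through a chain of short edges) and $w$ also lands in $B$, then $w$ may be the chosen representative, and $\euclideanmetric(w, \mathrm{rep}_v)$ can vastly exceed $\gamma R_i/(1+\epsilon)$ even though $\euclideanmetric(u,v) \le R_i$. In that case no edge is added, the components stay separate, and at any later scale where they do merge the new edges have weight $> \gamma\, \euclideanmetric(u,v)$, so property (*) fails for the pair $(u,v)$. Nothing in your probabilistic argument rules this out: the amplification lemma controls only whether $u$ and $v$ collide, not whether the representatives chosen in the colliding bucket happen to be close.

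A related symptom is the accounting ``$|E| = \cOtilde(n^{1+1/\gamma^2}\log\delta)$''. With a global union--find in which an edge is added only on a successful union, you would have $|E| \le n-1$; the quantity you actually bound by $\cOtilde(n^{1+1/\gamma^2}\log\delta)$ is the number of distance evaluations, not the number of edges. This is harmless on its own, but it signals that the interaction between the union--find and the bucket scan has not been nailed down. You also flag, but do not resolve, the cost of maintaining the per-bucket representative lists under merges; in the paper this difficulty does not arise because each bucket is processed in isolation and the expected number of ``useless'' distance tests per bucket is charged directly to the LSH collision bound, giving $\cOtilde(n)$ per run of \cref{algo:kt-bfs} and hence $\cOtilde(n^{1+1/\gamma^2}\log\delta)$ overall after the $L\cdot\cO(\log n\,\log\delta)$ repetitions.

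To repair your proof you would either have to fall back to the paper's stateless per-bucket BFS on points (losing the global-union--find optimization, which is not needed for the stated bound anyway), or test every \emph{point} of $B$ against the current set of representatives rather than testing representatives against each other; the latter can still be charged to ``productive merges plus $P_2$-bounded far collisions'', but it is a different algorithm from the one you describe, and its correctness argument must be rewritten so that it goes through $u$ and $v$ themselves, not their representatives.
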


\begin{restatable}{lemma}{lemmaauxkt}\label{lemma:aux:kt}
    Let $X$ be an $n$-points Euclidean space and let $\gamma > 1$.
    Let $E \subseteq X^2$ be a set of edges that satisfies (*).
    then running Kruskal's algorithm on $E'$ yields a $\gamma$-KT of $X$ in time $\cO(|E| \log |E|)$. 
\end{restatable}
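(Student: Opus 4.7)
The plan is to reduce the lemma to the well-known \emph{min-max path} (bottleneck) property of minimum spanning trees. First I would observe that property (*) immediately implies that the graph $(X, E)$ is connected: for any $u, v \in X$ there is a $u$-$v$ path in $E$. Hence Kruskal's algorithm run on $E$ produces a spanning tree $T$ of $X$ (a spanning tree of $(X, E)$, which is also a spanning tree of the complete graph $K_X$). Running Kruskal's amounts to sorting the $|E|$ edges and performing a union-find sweep, which yields the stated $\cO(|E| \log |E|)$ time bound.

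Next I would verify that this $T$ is a $\gamma$-KT of the complete Euclidean graph $K_X$ with edge weights $\euclideanmetric$. Let $e = (u, v)$ be any edge of $K_X \setminus E_T$; I need to show
\[
\euclideanmetric(u,v) \;\geq\; \frac{1}{\gamma} \max_{e' \in P_T(u,v)} \euclideanmetric(e').
\]
The key tool is the standard MST bottleneck property: if $T$ is a minimum spanning tree of a weighted graph $G = (V, E, w)$, then for every pair $u, v \in V$,
\[
\max_{e' \in P_T(u,v)} w(e') \;=\; \min_{P\,:\,u \leadsto v \text{ in } E} \; \max_{e' \in P} w(e').
\]
(If some path in $E$ had a smaller bottleneck, one could swap the maximum edge on $P_T(u,v)$ for one of that path's edges to obtain a lighter spanning tree, contradicting minimality.)

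Now I would use property (*): it guarantees a $u$-$v$ path $P \subseteq E$ whose edges all have Euclidean weight at most $\gamma \cdot \euclideanmetric(u,v)$. Applying the bottleneck identity with $G = (X, E)$,
\[
\max_{e' \in P_T(u,v)} \euclideanmetric(e') \;\leq\; \max_{e' \in P} \euclideanmetric(e') \;\leq\; \gamma \cdot \euclideanmetric(u,v),
\]
which is exactly the required $\gamma$-KT inequality for the (arbitrary) non-tree edge $(u,v)$. Since this holds for every edge of $K_X \setminus E_T$, the tree $T$ is a $\gamma$-KT of $X$, completing the proof.

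The main obstacle is really just invoking the MST bottleneck property cleanly. Everything else is bookkeeping: checking connectivity from (*), noting that Kruskal's output is a spanning tree of $K_X$ and not only of $(X,E)$, and translating the bottleneck inequality into the definition of a $\gamma$-KT. If a self-contained argument for the bottleneck property is preferred over citing it, a short exchange-argument proof (swap the max edge on $P_T(u,v)$ with the heaviest edge of the alternative path $P$ and contradict minimality of $T$) can be slotted in with no additional machinery.
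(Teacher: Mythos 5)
Your argument is correct and reaches the paper's conclusion by essentially the same mechanism: the paper carries out the exchange argument inline in terms of the order in which Kruskal's algorithm processes edges (if an edge $e'$ on the tree path from $u$ to $v$ had weight exceeding $\gamma\cdot\euclideanmetric(u,v)$, then the alternative $u$--$v$ path guaranteed by (*) would contain a lighter edge $e^*$ crossing the same cut, which Kruskal's would have added earlier, a contradiction), whereas you package the very same reasoning as an appeal to the classical min-max bottleneck property of minimum spanning trees and then specialize it using (*). The only substantive difference is one of modularity --- you cite the bottleneck identity as a known lemma while the paper re-derives exactly what it needs from Kruskal's invariant --- and both routes are valid, with your phrasing also cleanly covering the (harmless) case of non-unique MSTs since the bottleneck identity holds for every minimum spanning tree.
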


\subsection{Proof of \cref{prop:algo-property-star}}
To prove \cref{prop:algo-property-star}, we introduce \cref{algo:kt-bfs} that will be used as a useful subroutine in the proof.

\begin{algorithm}[htbp]
    \caption{Local-BFS in a graph}\label{algo:kt-bfs}
    \SetAlgoLined
    \KwIn{$X \subseteq \mathbb{R}^d$: set of points,\\
    $\gamma$: approximation parameter,\\
    $R$: target radius
    }
    \BlankLine
    $E \gets \emptyset$\;
    $h \gets (1/\gamma^2, \gamma, R)\textsc{-LSH}(X)$\;
    \ForEach{bucket $B$ of $h$}{
        $\texttt{coll} \gets 0$\;
        $S \gets B$\;
        \While{$S$ is not empty}{
            $x \gets S.\texttt{pop\_any}()$\;
            $q \gets \textsc{Queue}()$; $q.push(x)$\;
            \While{$q$ is not empty}{
                $u \gets q.pop()$\;
                \ForEach{\label{line:progress}$v$ in $S$}{
                    \eIf{$\euclideanmetric(u,v) \le \gamma\cdot R$}{
                        $S.\texttt{remove}(v)$\;
                        $E \gets E\cup \{(u,v)\}$\;\label{line:add_edge}
                        $q.push(v)$\;
                    }{
                        $\texttt{coll} \gets \texttt{coll} + 1$\;
                    }
                }
            }
        }
    }
    \Return{E}
\end{algorithm}

We start by proving properties of \cref{algo:kt-bfs}.
We first show that it runs in expected quasilinear time.
\begin{lemma}\label{lemma:algo-kt-bfs-complexity}
    \cref{algo:kt-bfs} runs in $\cOtilde(n)$ expected time.
\end{lemma}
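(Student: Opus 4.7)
The plan is to bound the total expected running time by separately analyzing the LSH evaluation and the work performed by the inner BFS loops across all buckets. First, drawing a random $(1/\gamma^2, \gamma, R)$-LSH function and evaluating it on all $n$ points takes $\cOtilde(n)$ time by the construction of \citet{andoni2006near}, and grouping the points into buckets by their hash value costs another $\cOtilde(n)$ with a hash table.

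The heart of the argument is a structural claim: within any bucket $B$, each unordered pair $\{u, v\} \subseteq B$ is examined by the inner loop (line \ref{line:progress}) at most once. This follows from the invariant that the algorithm always removes a vertex from $S$ before pushing it into the queue $q$, so by the time a vertex is popped from $q$ it has already left $S$. Hence if $u$ is popped while $v \in S$, the pair is examined at exactly that moment, and any later pop of $v$ cannot re-examine the pair because $u$ is no longer in $S$.

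I would then split the examinations into two types and bound each separately. A successful examination (when $\euclideanmetric(u,v) \leq \gamma R$) removes $v$ from $S$ and adds an edge to $E$; since each such event permanently shrinks $S$, at most $|B|-1$ of them occur per bucket, summing to at most $n$ across all buckets. An unsuccessful examination (a collision) involves two points $u, v$ in the same bucket with $\euclideanmetric(u,v) > \gamma R$, and the expected total number of such collisions is
\[
\sum_{\{x,y\}\,:\,\euclideanmetric(x,y) > \gamma R} \Prob{h(x) = h(y)} \;\leq\; \binom{n}{2} \cdot \frac{1}{n} \;=\; O(n),
\]
by the far-distance guarantee of the $(1/\gamma^2, \gamma, R)$-LSH.

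Finally, each examination evaluates $\euclideanmetric(u, v)$ in $O(d) = O(\log n)$ time (using the Johnson-Lindenstrauss reduction), and the set/queue operations are $O(\log n)$ each with standard data structures (e.g., a hash set for $S$ together with a doubly linked list to support efficient iteration and deletion). Combining, the total expected running time is $\cOtilde(n)$. The main subtlety is the ``examined at most once'' claim, which relies on carefully tracking the invariant linking $S$ and $q$; once this is in place the bound reduces directly to the LSH guarantees and a simple accounting of edge additions.
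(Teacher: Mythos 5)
Your proof is correct and follows essentially the same charging argument as the paper: LSH evaluation is $\cOtilde(n)$, and each iteration of the inner loop either removes a vertex from $S$ (at most $|B|$ times per bucket, so $O(n)$ total) or increments \texttt{coll}, whose expected total is $O(n)$ by the far-distance LSH guarantee. The one place you go slightly beyond the paper is in explicitly proving the ``each pair is examined at most once'' invariant (via the observation that a vertex leaves $S$ before entering $q$), which the paper leaves implicit; this is a genuine improvement in rigor, since without it the collision count per pair could exceed one and the union-over-pairs bound would not immediately apply. The bookkeeping details differ slightly (the paper uses a filtering implementation giving $O(1)$ set operations, you use $O(\log n)$), but both land in $\cOtilde(n)$.
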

\begin{proof}
    As mentioned above, applying the LSH function $h$ to all points of $X$ takes $\cOtilde(n)$ time.

    We show that for each bucket $B$ of $h$, the body of the outer \texttt{foreach} loop
    runs in time $\cO(|B| + collisions(B))$, where $collisions(B)$ is the value
    of \texttt{coll} at the end of the \texttt{foreach} iteration on bucket $B$.
    
    By implementing the removal from $S$ as a filtering procedure, both branches of the \texttt{if}
    statement take constant time.
    
    Next, the algorithm enters the body of the \texttt{foreach} loop on \cref{line:progress}
    at most $|B| + collisions(B)$ times,
    as the loop body either removes a vertex for $S$, which has size $|B|$,
    or increases the value of \texttt{coll}.
    Since each vertex is pushed (and therefore popped) from the queue at most once,
    the outer \texttt{while} loop runs in time  $\cO(|B| + collisions(B))$,
    and the \cref{algo:kt-bfs} runs in time $\cOtilde(n + \sum_B collisions(B))$.

    It remains to show that the expected value of $\sum_B collisions(B)$ is $\cO(n)$.
    The variable \texttt{coll} is incremented whenever we encounter a pair of vertices $u,v$ that fall into the same bucket and $\euclideanmetric(u,v) > \gamma R$.
    By definition of the LSH function, this event occurs with probability at most $1/n$, hence
    over all pairs $u,v$ we have at most $n$ such collisions in total in expectation,
    and therefore \cref{algo:kt-bfs} runs in time $\cOtilde(n)$.
\end{proof}

We now show that the set $E$ of edges computed by \cref{algo:kt-bfs} has a desired connectivity property.
\begin{lemma}\label{lemma:algo-kt-bfs-correct}
    \cref{algo:kt-bfs} returns a set $E$ of up to $n$ edges of weight at most $\gamma R$
    such that for any pair of points $(u,v) \in X^2$
    with $\euclideanmetric(u,v) \le R$ and $h(u) = h(v)$,
    there is a path from $u$ to $v$ in $E$.
\end{lemma}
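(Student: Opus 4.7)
The plan is to verify three independent properties of the returned set $E$: (a) every edge in $E$ has weight at most $\gamma R$; (b) $|E| \le n$; and (c) the desired connectivity condition on pairs $(u,v)$ with $\euclideanmetric(u,v) \le R$ and $h(u) = h(v)$. Properties (a) and (b) are essentially by inspection of the algorithm, and the bulk of the argument will be devoted to (c).

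For (a), every edge $(u,v)$ added at \cref{line:add_edge} lies inside the \texttt{if}-branch guarded by the test $\euclideanmetric(u,v) \le \gamma R$, so the weight bound is immediate. For (b), each vertex $v \in X$ belongs to exactly one bucket (the one determined by $h(v)$), and during the processing of that bucket $v$ leaves $S$ at most once: either as the starting vertex $x$ popped by \texttt{pop\_any} (which does not add an edge), or via a \texttt{S.remove} call, each instance of which adds exactly one edge to $E$. Summing over buckets, $|E| \le n$.

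For (c), I would fix a bucket $B$ and introduce the auxiliary graph $G_B$ on vertex set $B$ whose edges are exactly the pairs at Euclidean distance at most $\gamma R$. The claim I would prove is that the processing of $B$ performs a sequence of BFS explorations of $G_B$, each iteration of the outer \texttt{while} loop exploring exactly one connected component of $G_B$ and adding a spanning BFS tree of it to $E$. The main obstacle is that $S$ shrinks during the processing of $B$, so one must verify that an ongoing or subsequent BFS is not prevented from discovering its whole component by earlier removals. To handle this I would maintain the invariant that at the start of each iteration of the outer \texttt{while} loop, $S$ is exactly the union of the connected components of $G_B$ not yet explored; in particular the component of the freshly chosen $x$ lies entirely in $S$. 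Granted this invariant, a standard BFS correctness argument applies, since whenever $u$ is popped from $q$ the inner \texttt{foreach} scans every remaining undiscovered vertex and thus in particular every undiscovered neighbor of $u$ in $G_B$; this discovers all of $x$'s component, adds a spanning BFS tree of it to $E$, and restores the invariant for the next iteration.

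To conclude the lemma from (c), consider any $u,v$ with $\euclideanmetric(u,v) \le R$ and $h(u)=h(v)$. Both points lie in a common bucket $B$, and since $\gamma \ge 1$ we have $\euclideanmetric(u,v) \le R \le \gamma R$, so $(u,v)$ is an edge of $G_B$. Therefore $u$ and $v$ are in the same connected component of $G_B$, hence are discovered in the same BFS iteration rooted at some $x$, and the corresponding BFS tree (entirely contained in $E$) provides a path from $u$ to $v$ through $x$.
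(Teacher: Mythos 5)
Your proof is correct and follows the same underlying idea as the paper: recognize the inner loop as a breadth-first traversal and derive the connectivity claim from the correctness of BFS. Two presentational differences are worth noting, both in your favor as matters of precision. First, you define the auxiliary graph $G_B$ on the bucket $B$ alone, whereas the paper works with $G_{\gamma R}$ on all of $X$; since the algorithm never examines points outside $B$, the paper's intermediate induction claim (``for every $u\in B$ with $d_G(x,u)$ finite, there is a path from $x$ to $u$ in $E$'') is, read literally, slightly too strong --- a vertex $u\in B$ could be connected to $x$ in $G_{\gamma R}$ only through points outside $B$ and would never be discovered --- and the induction step quietly assumes that the witnessing neighbor $v$ with $d_G(x,v)=k$ lies in $B$. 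Your restriction to $G_B$ is exactly the right graph and also suffices for the lemma, since pairs at distance $\le R \le \gamma R$ are adjacent in $G_B$. Second, you replace the paper's induction on BFS depth with a loop invariant (that $S$ is the union of the as-yet-unexplored connected components of $G_B$); the two are equivalent formalizations of BFS correctness, but your invariant makes explicit why a BFS in a later iteration of the outer \texttt{while} loop is not impeded by removals made in earlier iterations, a point the paper leaves implicit.
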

\begin{proof}
    Since we only add to $E$ edges of weight at most $\gamma R$, the condition on edge weights is satisfied.
    Furthermore, when adding an edge $(u,v)$ to $E$, we remove $v$ from $S$ and it is never added back, hence $E$ contains at most $n$ edges.

    To prove the connectivity property,
    we show that the body of the outer \texttt{while} loop adds to $E$
    a set of edges that spans the subset of vertices $u$ in $B$ that are connected to $x$
    with edges from of weight at most $\gamma R$.
    Let $G_{\gamma R}$ denote the graph with vertex set $X$ and all edges of length at most $\gamma R$.
    First, notice that the inner \texttt{while} loop of \cref{algo:kt-bfs} performs a breadth-first traversal,
    i.e. the vertices are enumerated in order of nondecreasing distance $d_G$ to $x$,
    where $d_G(u,v)$ defined as the minimum number of edges in a $u$--$v$ path in $G_{\gamma R}$, and $+\infty$ is there is no such path.

    We show by induction on $d_G(x, u)$ that for every $u\in B$ such that $d_G(x,u)$ is finite, then there is a path from $x$ to $u$ in $E$.
    The base case is when $d_G(x, u) = 0$, i.e. $x = u$: the property is trivially satisfied.
    Next, let $u\in B$ be such that $d_G(x, u)= k+1$.
    By definition, every neighbor $v$ of $u$ is at distance at least $k$ of $x$, and at least one of them satisfies $d(x,v) = k$.
    Consider the first such $v$ enumerated by the algorithm: all other neighbors of $u$ will be enumerated after $v$, therefore, at this point, $u$ has not been removed from $S$, hence we add the edge $(u,v)$ to $E$.
    By induction hypothesis, there is a path from $x$ to $v$ in $E$, and the edge $(u,v)$ ensures that there is also a path from $x$ to $u$ in $E$, concluding our induction.
\end{proof}

Next, we use \cref{algo:kt-bfs} to build a set of edges that connects every pair of points at distance at most $R$ with edges of weight at most $\gamma R$.
\begin{corollary}\label{coro:kt-bfs}
    Let $X$ be a subset of $n$ points in $\reals^d$, let $\gamma > 1$, and let $R$ be a target radius.
    There is an algorithm that runs in expected time $\cOtilde(n^{1 + 1/\gamma^2})$
    and, with probability at least $1 - 1/n$,
    returns a set $E \subseteq X^2$ of at most $\cOtilde(n^{1 + 1/\gamma^2})$ edges,
    each of length at most $\gamma R$
    such that for any pair $(u,v)\in X^2$ of distance at most $R$,
    there is a path from $u$ to $v$ in $E$.
\end{corollary}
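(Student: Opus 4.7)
The plan is to boost the single-hash algorithm (\cref{algo:kt-bfs}) by running it independently for $T = \Theta(n^{1/\gamma^2} \log n)$ iterations, each with a freshly sampled hash function from the $(1/\gamma^2, \gamma, R)$-LSH family $\Hh_{\gamma, R}$, and returning the union $E$ of all the edge sets produced.

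The three easy properties to check are the edge count, the edge weight bound, and the running time. Each individual call to \cref{algo:kt-bfs} returns at most $n$ edges (by \cref{lemma:algo-kt-bfs-correct}), so $|E| \leq T \cdot n = \cOtilde(n^{1 + 1/\gamma^2})$. All returned edges have length at most $\gamma R$ by the same lemma. By \cref{lemma:algo-kt-bfs-complexity}, each call runs in expected time $\cOtilde(n)$, so by linearity of expectation the total expected running time is $T \cdot \cOtilde(n) = \cOtilde(n^{1 + 1/\gamma^2})$.

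The heart of the argument is the connectivity guarantee. Fix a pair $(u,v)\in X^2$ with $\euclideanmetric(u,v) \le R$. By the definition of the LSH family, a single hash function $h$ drawn from $\Hh_{\gamma, R}$ satisfies $\Prob{h(u)=h(v)} \geq n^{-1/\gamma^2}$. Hence the probability that none of the $T$ independently drawn hash functions maps $u$ and $v$ to the same bucket is at most
\[
\left(1 - n^{-1/\gamma^2}\right)^T \leq \exp\!\left(-T \cdot n^{-1/\gamma^2}\right).
\]
Choosing $T = 3\, n^{1/\gamma^2} \ln n$ makes this bound at most $n^{-3}$. A union bound over all $\binom{n}{2} \le n^2$ pairs in $X$ shows that with probability at least $1 - 1/n$, every pair $(u,v)$ at distance at most $R$ is captured by at least one of the hash functions. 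For any such pair, the call to \cref{algo:kt-bfs} using the corresponding hash function returns, by \cref{lemma:algo-kt-bfs-correct}, a set of edges connecting $u$ to $v$, all of weight at most $\gamma R$; these edges are part of $E$, so there is a $u$–$v$ path in $E$ as required.

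The only mildly delicate point is reconciling that \cref{lemma:algo-kt-bfs-complexity} bounds the \emph{expected} running time of a single call, while we need to compose $T$ such calls: this is immediate from linearity of expectation and independence of the hash choices, and standard Markov-style arguments let us also state the bound with high probability if desired. The rest is routine, so no step is expected to be a real obstacle.
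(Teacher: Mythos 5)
Your proof is correct and follows essentially the same route as the paper: run \cref{algo:kt-bfs} independently $\Theta(n^{1/\gamma^2}\log n)$ times, take the union of the returned edge sets, invoke \cref{lemma:algo-kt-bfs-correct} for each pair that collides in some run, and close with a union bound over all pairs; the running-time and edge-count bookkeeping matches as well.
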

\begin{proof}
    We run \cref{algo:kt-bfs} $L = 3 n^{1/\gamma^2} \log n$ times independently with parameters $(\gamma, R)$, and return the union of all resulting sets $E$.

    Consider a pair $u,v$ of points of $X$ such that $\euclideanmetric(u,v) \le R$.
    If at some step $i=1,\ldots, L$ of the above iteration, $u$ and $v$ are in the same bucket of $h$, then by \cref{lemma:algo-kt-bfs-correct}, there will be a path from $u$ to $v$ in $E$.
    As $\euclideanmetric(u,v) \le R$, the probability that $u$ and $v$ fall in the same bucket at a fixed step $i$ is at least $n^{-\rho}$.
    Therefore, using independence of the runs, the probability that $u$ and $v$ never fall in the same bucket over all $L$ runs is at most 
    \[(1-n^{-\rho})^L \le e^{-3\log n} = 1/n^3.\]
    Furthermore, the algorithm fails only when there exists a pair $u,v$ such that $\euclideanmetric(u,v) \le R$
    that never fall in the same bucket.
    By union bound over all pairs $u,v$, this event occurs with probability at most $1/n$.

    The running time of this procedure follows from \cref{lemma:algo-kt-bfs-complexity}.
\end{proof}

\begin{proof}[Proof of \cref{prop:algo-property-star}]
    By Corollary~\ref{coro:kt-bfs}, there is an algorithm which, when used with a fixed value $R$, ensures that any two points $u, v$ such that $\euclideanmetric(u,v) \le R$ are connected by a path of edges of length at most $\gamma \cdot R$ with probability at least $1-1/n$.
    
    To create the set $E$ satisfying property (*), we run this algorithm for $O(\log n \cdot \log \delta)$ values of $R$, in order to cover the range of distances between points in $X$. 
    More precisely let $d_{\min}$ and $d_{\max}$ denote the minimum and maximum distance between distinct points in $X$, and let $\delta = d_{\max}/d_{\min}$ denote the spread of $X$.
    Let  $\tau = 1+1/\log n$. By using the algorithm of \cref{coro:kt-bfs} with parameter $\gamma' = \gamma / \tau$ and $R = d_{\min}, d_{\min}\tau, d_{\min}\tau^2, \ldots, \tau d_{\max}$, we get a set $E$ that connects any pair of points at distance $d$ with edges of
    weight $\tau \cdot (\gamma/\tau) \cdot d = \gamma \cdot d$.
    The number of calls to \cref{coro:kt-bfs} is, up to a constant 
    \[\log_{\tau}(d_{\max}/d_{\min}) = \frac{\log(\delta)}{\log \tau} = \cO(\log n \cdot \log (\delta)).\]

    The running time of the algorithm of \cref{coro:kt-bfs} with parameter $\gamma'$ is $\cOtilde(n^{1 + 1/\gamma'^2})$, which is $\cOtilde(n^{1 + 1/\gamma^2})$: this concludes the proof.
\end{proof}

\subsection{Proof of \cref{lemma:aux:kt}}
We are now ready to prove \cref{lemma:aux:kt}.

\begin{proof}[Proof of \cref{lemma:aux:kt}]
    Kruskal's algorithm builds a minimum spanning tree by first sorting the edges in $E$ in order of non-decreasing weights,
    and then iterating over all edges in order, adding to $T$ each edge that connects two disjoint connected components of $T$.

    Assume that for the sake of contradiction that the output tree $T$ is not a $\gamma$-KT.
    Then there exists an edge $(u,v)$ of weight $w = \euclideanmetric(u,v)$ such that there is an edge $e'$ of weight $\euclideanmetric(e') > \gamma\cdot w$
    on the path from $u$ to $v$ in $T$.
    Removing $e'$ from $T$ yields two connected components $C_1$ and $C_2$, with $u\in C_1$ and $v\in C_2$. 
    By assumption, there is a $u-v$ path consisting of edges in $E'$, each of weight at most $\gamma\cdot w$.
    As $u\in C_1$ and $v\in C_2$, one edge $e^*$ of this path is not in $T$ and has one endpoint in $C_1$ and the other in $C_2$.
    As $\euclideanmetric(e^*) < \euclideanmetric(e')$, $e^*$ was considered before $e'$ by the algorithm, i.e. at a time where $C_1$ and $C_2$ were disjoint. Therefore, the algorithm has added $e^*$ to $T$, a contradiction as $e^* \notin T$.
\end{proof}

\section{Better Cut-Weights via Approximate Farthest Neighbors} \label{sec:cw}

The second step of \cref{algo:highest-level-approx} is to compute the cut weights of the spanning tree obtained in the first step. \citet{FKW95} provided a quadratic-time algorithm to compute the exact cut weights. More recently, \cite{CKL20} and \cite{CDL21} proposed a $5$- and a $\sqrt{2}$-approximation algorithm that both operate in quasilinear time. However, their approximations are inherently limited due to their reliance on specific geometric properties of Euclidean spaces.

In this section, we introduce an $\alpha$-approximation algorithm for the cut weights that works for any \(\alpha > 1\). Our result is as follows:

\thmcw*

The core ingredient of our algorithm is a \textit{dynamic} version of the data structure for approximate farthest neighbor of \citet{pagh2017approximate}, which we present in the next subsection.
The algorithm behind the \cref{thm:cw} is described in the second part of this section.

\subsection{Dynamic Approximate Farthest Neighbor}\label{sec:afn}

In order to obtain an $\alpha$-approximation of the cut weights, we use the data structure of \citet{pagh2017approximate} for \textit{approximate farthest neighbors} (AFN) in Euclidean spaces. Their data structure preprocesses a subset of a metric space,
and can then find in that subset an $\alpha$-approximate farthest neighbor of a given query point in time $\cOtilde(n^{1/\alpha^2})$.

\begin{definition}[\afnpb]\label{def:afn}
    Let $(X, d)$ be a metric space and let $\alpha > 1$.
    A data structure $\Dd$ solves the $\alpha$-\afnpb problem over a set $S\subseteq X$ if,
    given a point $q\in X$, it returns a point $r$ that is an $\alpha$-approximate farthest neighbor of $q$ in $S$, i.e. it holds that
    \[d(q, r) \ge \frac{1}{\alpha} \max_{p\in S} d(q, p).\]
\end{definition}

To fit our use case, we show that one can extend the data structure of \citet{pagh2017approximate} to be \textit{dynamic}, i.e. given the data structure for two disjoint subsets $S, S'$ of $(X, d)$, we can construct a data structure for $S \sqcup S'$ faster than the time needed to build it from scratch.

\begin{restatable}{theorem}{thmafn}\label{thm:afn}
    There is a data structure for $\alpha$-\afnpb over a dynamic partition of a metric space $(X, d)$ of $n$ points that supports the following operations:
    \begin{itemize}
        \item \textsc{Initialize}$(X, \alpha)$: create an data structure containing a cluster $S_x = \{x\}$ for each $x$ in $X$, in time $\cOtilde(n^{1+1/\alpha^2})$.
        \item \textsc{Query}$(\Dd, S, q)$: given a cluster $S$ in $\Dd$ and a query point $q\in X$, return an $\alpha$-approximate farthest neighbor of $q$ in $S$ in time $\cOtilde(n^{1/\alpha^2})$.
        \item \textsc{Merge}$(\Dd, S, S')$: given two clusters $S, S'$ in $\Dd$, add the cluster $S'' = S \sqcup S'$ to $\Dd$ in time $\cOtilde(n^{1/\alpha^2} \cdot \min(|S|, |S'|, n^{1/\alpha^2}))$.
        This operation consumes the clusters $S$ and $S'$, i.e. they cannot be used in other operations afterward.
    \end{itemize}
    This data structure uses $\cOtilde(n^{1+1/\alpha^2})$ space.
    Furthermore, this construction is probabilistic (in the \textsc{Initialize} function), and for every $q$ and $S$, the $\textsc{Query}$ operation fails with probability at most $1/n^3$.
\end{restatable}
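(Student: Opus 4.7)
The plan is to extend the static $\alpha$-\afnpb data structure of Pagh into a dynamic one over a partition, by sampling all the randomness once at initialization and by maintaining cheap per-cluster, per-direction summaries that can be updated under merges. Pagh's construction draws $L = \tilde{\Theta}(n^{1/\alpha^2})$ random unit vectors $v_1, \dots, v_L$ and shows that, for any query point $q$ and any subset $S$ of the input, with high probability the point $p \in S$ attaining $\max_{p' \in S} |v_i \cdot (p' - q)|$ for some $i$ is an $\alpha$-AFN of $q$ in $S$. This reduces the problem, even in the dynamic setting, to maintaining per cluster and per direction the points of minimum and maximum $v_i$-projection.

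\textsc{Initialize} samples the $L$ directions, computes $v_i \cdot x$ for every $i$ and every $x \in X$ in $\cOtilde(L \cdot n \cdot d) = \cOtilde(n^{1+1/\alpha^2})$ time using $d = O(\log n)$, and creates $n$ singleton clusters whose per-direction data is trivial. For each cluster $S$ and each direction $v_i$ we maintain a balanced BST keyed by projection value, supporting $O(\log n)$-time insertion and $O(\log n)$-time access to the current extremes. \textsc{Query}$(\Dd, S, q)$ projects $q$ onto each $v_i$, reads off the current extremes of $S$ along each direction, picks whichever yields the larger $|v_i \cdot (p - q)|$ per direction, and returns the candidate with the largest exact $\euclideanmetric$-distance to $q$ among the $L$ resulting points, in $\cOtilde(L \cdot d) = \cOtilde(n^{1/\alpha^2})$ time.

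\textsc{Merge}$(\Dd, S, S')$ applies two strategies and keeps the cheaper one. When $\min(|S|, |S'|) \le n^{1/\alpha^2}$ we do small-to-large insertion: iterate over the smaller cluster and insert each of its points into the $L$ trees of the larger one, paying $\cOtilde(L) = \cOtilde(n^{1/\alpha^2})$ per insertion, for a total of $\cOtilde(n^{1/\alpha^2} \cdot \min(|S|, |S'|))$. When both $|S|, |S'| > n^{1/\alpha^2}$, we additionally maintain for every direction a compressed digest of size $\cOtilde(n^{1/\alpha^2})$ that suffices to recover the projection-extremes after any future merge; combining two such digests direction by direction costs $\cOtilde(L \cdot n^{1/\alpha^2}) = \cOtilde(n^{2/\alpha^2})$. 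Taking the minimum of the two regimes yields the claimed merge cost. Summed across the partition the per-cluster trees use $\cOtilde(L \cdot n) = \cOtilde(n^{1+1/\alpha^2})$ space, which dominates.

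The main obstacle is twofold. First, because the projections are sampled once and reused across all clusters and queries, we must argue an $\alpha$-approximation guarantee per query with failure probability at most $1/n^3$. This follows from Pagh's Chernoff analysis for the event that $|v_i \cdot (p^* - q)| < \|p^* - q\|/\alpha$ for every $i$, applied to each pair $(q, p^*) \in X^2$ and then union-bounded over the $O(n^2)$ such pairs; inflating $L$ by a constant factor (absorbed into $\cOtilde$) drives the tail below $1/n^3$. Second, we need the both-large merge to stay correct: since the AFN witness in $S \sqcup S'$ along each $v_i$ is simply the better projection-extreme of $S$ and $S'$, the directionwise digests can be aggregated pointwise via $\min$ and $\max$, which both gives correctness and keeps the digest size within the $\cOtilde(n^{1/\alpha^2})$ budget per direction throughout any sequence of merges.
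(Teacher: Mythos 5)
There is a genuine gap, and it comes from what you claim Pagh et al.\ prove. You assert that with high probability the point $p \in S$ attaining $\max_{p'\in S}\lvert v_i\cdot(p'-q)\rvert$ for \emph{some} $i$ is already an $\alpha$-AFN, and you conclude that it suffices to maintain only the per-direction projection extremes. That is not what the \textsc{AFN} data structure of Pagh et al.\ guarantees, and it is why the paper's own proof is more involved. The data structure stores, for each direction $a_i$, the top $M = \cOtilde(n^{1/\alpha^2})$ values $\beta_{ij} = \langle a_i, p_j\rangle$ (not just the extremes), and the query algorithm performs a query-dependent heap-style iteration over these lists, in order of decreasing $\beta_{ij}$, to select a set of $M$ candidates. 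Only then is an $\alpha$-AFN present among the candidates with constant probability (amplified to $1-1/n^3$ by $\Theta(\log n)$ independent copies). If you only keep the two extremes per direction, a point with an extreme raw projection $\langle a_i, p\rangle$ but small distance $\lVert p-q\rVert$ can block the true farthest point along \emph{every} direction, and the Chernoff/union-bound argument you gesture at no longer applies: it concerns pairs $(q,p^*)$ and a fixed threshold along one direction, not the event that $p^*$ is the single directional maximizer among $n$ points.

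Your data structure, as described, is also internally inconsistent with this misreading. If extremes really sufficed, a merge would cost $O(L)$ by taking coordinatewise $\min/\max$, and neither the per-cluster BSTs nor the ``compressed digests'' would be needed; yet you invoke both, as well as a two-regime merge cost that matches the theorem statement only once you acknowledge that $\cOtilde(n^{1/\alpha^2})$ entries per direction are being kept and merged. The paper's proof does exactly that in one regime: each cluster keeps, for each of the $L' = \cOtilde(n^{1/\alpha^2})$ directions, a BST with the top $\min(\lvert S\rvert, M)$ projection values; merges move the $\cO(\min(\lvert S\rvert, M))$ entries of the smaller cluster into each of the larger cluster's $L'$ trees and truncate, giving $\cOtilde\bigl(n^{1/\alpha^2}\cdot\min(\lvert S\rvert,\lvert S'\rvert, n^{1/\alpha^2})\bigr)$ automatically, and moving (rather than copying) pointers keeps the space at $\cOtilde(n^{1+1/\alpha^2})$. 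To repair your argument you would need to (i) replace the ``extremes only'' query with Pagh et al.'s heap-based traversal over the top-$M$ lists per direction, and (ii) replace the two ad hoc merge regimes by the single truncated-BST merge above, which already subsumes both.
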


Before proving \cref{thm:afn}, we briefly recall how the approximate farthest neighbor data structure of \citet{pagh2017approximate} works.
We will then show how to adapt it to our needs.

Let $p_1, p_2, \dots, p_n$ be the points in $S$. Intuitively, the data structure of Pagh et al. uses projections on many random lines to identify points in the input set that are ``extremal'' along some direction, and search the farthest point of a given query point in this subset of extremal points.
More precisely, the data structure samples $L = \cO(n^{1/\alpha^2})$ Gaussian random vectors $a_1,\ldots, a_L$.
Then, for each $i \le L$ and each $j = 1,\ldots, n$, let $\beta_{ij}$ denote the value of the inner product $\langle a_i, p_j\rangle$.
For each $i$, the structure stores a collection $\Cc_i$ containing the (up to) $M = \cOtilde(n^{1/\alpha^2})$ pairs $(j, \beta_{ij})$
that have the largest value of $\beta_{ij}$. 
Storing these indices requires $\cO(n^{1/\alpha^2} \cdot \min(|S|, n^{1/\alpha^2}))$ space.

While this data structure may store up to $\cOtilde(n^{2/\alpha^2}$ candidate points, Pagh et al. show how to select a subset $S'$ of $M$ points that depends on the query point $q$ so that with constant probability, $S'$ contains an $\alpha$-approximate farthest neighbor of $q$.
Under the assumption that for every $i$, there is an efficient way of iterating over the $p_j$ in order of decreasing values of $\langle a_i, p_j\rangle$, their query algorithm runs in time $\cOtilde(n^{1/\alpha^2})$.

Note that the randomness for the probability of error is taken over the choice of the $a_i$'s at construction not: once these are fixed, the query algorithm is deterministic.
By taking $\Theta(\log n)$ independent copies of this data structure and returning the farthest point from $q$ across all queries, we can reduce the probability of error to less than $1/n^3$.

We are now ready to explain how to build the dynamic data structure of \cref{thm:afn}.
\begin{proof}[{Proof of \cref{thm:afn}}]
    We first select $L' = L \cdot \Theta(\log n) = \cOtilde(n^{1/\alpha^2})$ random Gaussian vectors $a_1,\ldots, a_{L'}$.
    The data structure $\Dd$ will maintain a dynamic partition of $X$, and for each \textit{cluster} (i.e. each set) $S$ in the partition,
    store for every $i$ the $\min(|S|, M)$ points that maximize the value of $\beta_{ij} =\langle a_i, p_j\rangle$.

    For the \textsc{Initialize} operation, each cluster contains a single point: we only need to compute the values $\beta_{ij}$ for every $i$
    and $j$, which takes time $\cOtilde(n^{1+1/\alpha^2})$ as $d = \cO(\log n)$.

    For every $i$, we store the collection $\Cc_i$ of pairs $(j, \beta_{ij})_{j = 1,\ldots, M}$ in a binary search tree using $\beta_{ij}$ as key. As we can efficiently iterate over the elements of a binary search tree in order of decreasing keys,
    the \textsc{Query} operation runs in time $\cOtilde(n^{1/\alpha^2})$ using the algorithm of Pagh et al.
    Furthermore, as $L' = L \cdot \Theta(\log n)$, the probability of returning a point that is not a $c$-approximate neighbor of the query point $q$ is at most $1/n^3$. 

    Finally, using binary search trees allows us to implement the \textsc{Merge} operation in time $\cOtilde(n^{1/\alpha^2} \cdot \min(|S|, |S'|, n^{1/\alpha^2}))$. Let $S$ and $S'$ be two clusters, and assume w.l.o.g. that $|S| \le |S'|$.
    To merge $S$ and $S'$, we move all elements from the cluster $\Cc_i$ of $S$ to the corresponding cluster $\Cc_i'$ of $S'$, and then truncate $\Cc_i'$ to keep only the $M$ pairs with largest key.
    As $|\Cc_i'|\le n$, inserting an element into $\Cc_i'$ takes time $\cO(\log n)$ and moving the elements for a fixed $i$ takes time $\cO(|\Cc_i| \cdot \log n)$.
    As $\Cc_i$ contains $\cO(\min(|S|, n^{1/\alpha^2})$ elements and there are $L'  =\cOtilde(n^{1/\alpha^2})$ collections $\Cc_i$ to move,
    this takes a total of $\cOtilde(n^{1/\alpha^2} \cdot \min(|S|, |S'|, n^{1/\alpha^2}))$ time.

    Moving elements instead of copying them ensures that the space total usage remains $\cO(n^{1+1/\alpha^2})$.
\end{proof}

\subsection{Approximate Cut-weight Algorithm}\label{sec:cw-algo}

We give a proof of \cref{thm:cw}, i.e. we give an algorithm that computes an $\alpha$-approximation of cut weights of a tree in time and space $\cOtilde(n^{1+1/\alpha^2})$, using the data structure of \cref{thm:afn}.

By augmenting the aforementioned data structure with a disjoint-set data structure,
we can additionally support the following operations:
\begin{itemize}
    \item \textsc{Find}$(\Dd, q)$: return the (unique) cluster $S$ in $\Dd$ that contains $q$, in time $O(\log^* n)$.
    \item \textsc{Enumerate}$(\Dd, S)$: iterate over all elements of a cluster $S$ in $\Dd$, in total time $O(|S|)$.
\end{itemize}
The procedure to compute an $\alpha$-approximation of the cut weights is given in \cref{algo:cw}.

\begin{algorithm}[htbp]
    \caption{$\alpha$-approximation of the cut weights}\label{algo:cw}
    \SetAlgoLined
    \KwIn{$X \subseteq \mathbb{R}^d$: set of points,\\
    $T$: spanning tree as a list of weighted edges sorted by non-decreasing weight,\\
    $\alpha > 1$: approximation parameter}
    \BlankLine
    $\Dd \gets \textsc{Initialize(X)}$\;
    \ForEach{\label{line:foreach}edge $e = (x, y)$ in increasing order of weights}{
        $S_x \gets \textsc{Find}(\Dd, x)$;\label{line:find}
        $S_y \gets \textsc{Find}(\Dd, y)$;

        \If{$|S_x| > |S_y|$}{
            Swap $S_x$ and $S_y$\;\label{line:swap}
        }
        $\ACW(e) \gets \alpha \cdot \max \{ w(z, \textsc{Query}(\Dd, S_y, z)) | z \in \textsc{Enumerate}(\Dd, S_x) \}$\;\label{line:acw}
        $\textsc{Merge}(\Dd, S_x, S_y)$\;\label{line:merge}
    }
    \Return{\ACW}
\end{algorithm}

We now turn to proving \cref{thm:cw}.
We first analyze the complexity of \cref{algo:cw}.
The space complexity of $\cOtilde(n^{1+1/\alpha^2})$ follows from that of the data structure $\Dd$ of \cref{thm:afn}.
To obtain the desired time complexity, we crucially rely on the fact that, on \cref{line:acw} of \cref{algo:cw},
we iterate over the smallest of $S_x$ and $S_y$ and query the other when computing $\ACW$.
This allows us to prove using a counting argument that the algorithm makes $\cO(n\log n)$ calls to \textsc{Query},
showing that the algorithm runs in time $\cOtilde(n^{1+1/\alpha^2})$.
\begin{restatable}{lemma}{lemmaalgocwcomplexity}\label{lemma:algocw-complexity}
    \cref{algo:cw} runs in time $\cOtilde(n^{1+1/\alpha^2})$.
\end{restatable}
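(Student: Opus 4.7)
The plan is to bound separately the cost of each operation performed by \cref{algo:cw} and then sum. The one-shot \textsc{Initialize} call contributes $\cOtilde(n^{1+1/\alpha^2})$ time by \cref{thm:afn}. The main loop runs $n-1$ times (once per edge of the input spanning tree $T$), and inside each iteration the two \textsc{Find} calls together with the \textsc{Enumerate} call cost only $O(\log^\ast n + |S_x|)$, which will be absorbed by the bounds derived below. Hence the remaining task is to bound the cumulative cost of the \textsc{Query} calls on \cref{line:acw} and the \textsc{Merge} calls on \cref{line:merge}.

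The key tool will be a standard small-to-large (a.k.a.\ union-by-size) argument. For each point $z \in X$, let $\Phi(z)$ denote the number of loop iterations during which $z$ lies in $S_x$, that is, in the \emph{smaller} of the two clusters after the swap on \cref{line:swap}. Every such iteration ends with a \textsc{Merge} that places $z$ into a cluster of size at least twice its previous size, so $\Phi(z) \le \log_2 n$, and therefore $\sum_{z \in X} \Phi(z) \le n \log_2 n$. Since each iteration issues exactly $|S_x|$ calls to \textsc{Query} on \cref{line:acw}, the total number of \textsc{Query} calls over the entire execution equals $\sum_{z} \Phi(z) = O(n \log n)$; multiplying by the $\cOtilde(n^{1/\alpha^2})$ cost per query guaranteed by \cref{thm:afn} gives a cumulative \textsc{Query} cost of $\cOtilde(n^{1+1/\alpha^2})$.

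The same potential analysis will handle the merges: \cref{thm:afn} bounds the cost of a single call \textsc{Merge}$(\Dd, S_x, S_y)$ by $\cOtilde(n^{1/\alpha^2}\cdot \min(|S_x|,|S_y|))$, so the total cost of all merges is at most $\cOtilde(n^{1/\alpha^2}) \cdot \sum_{\text{iter}} \min(|S_x|,|S_y|)$, and this sum is again at most $n \log_2 n$ by the same doubling argument (each element participates as "smaller" only $O(\log n)$ times). Adding the three contributions yields the claimed $\cOtilde(n^{1+1/\alpha^2})$ bound.

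The only point requiring care, and likely the main (mild) obstacle, is to cleanly tie the doubling invariant to the swap on \cref{line:swap}: it is exactly this swap that guarantees that every $z$ enumerated on \cref{line:acw} belongs to the smaller side, so that the cluster containing $z$ at least doubles after the subsequent \textsc{Merge}. Once this is nailed down, the rest is a routine accounting that matches the $\cOtilde$ target exactly.
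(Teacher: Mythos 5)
Your proof is correct and follows essentially the same approach as the paper's: both bound the per-iteration cost at $\cOtilde(n^{1/\alpha^2}\cdot\min(|S_x|,|S_y|))$ using \cref{thm:afn}, and both invoke the small-to-large (union-by-size) doubling argument — keyed to the swap on \cref{line:swap} — to show that $\sum_{\text{iter}}\min(|S_x|,|S_y|) = \cO(n\log n)$. Your potential-function phrasing $\Phi(z)$ is a mild cosmetic variant of the paper's counting of how often each point lands in the smaller cluster, but the substance is identical.
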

\begin{proof}
    First, we show that the body of the \texttt{foreach} loop on \cref{line:foreach} of \cref{algo:cw}
    runs in time $\cOtilde(n^{1/\alpha^2} \cdot \min(|S_x|, |S_y|)$.
    The two \textsc{Find} operations on \cref{line:find} run in time $O(\log^* n) = \cOtilde(n^{1/\alpha^2})$.
    Then, after \cref{line:swap}, $|S_x|$ is at most $|S_y|$, hence we show that the rest of the loop body runs in time
    $\cOtilde(n^{1/\alpha^2} \cdot |S_x|)$.
    On \cref{line:acw}, we make a single call to \textsc{Enumerate}, plus one distance computation and one call to \textsc{Query} for each $z\in S_x$ 
    which takes a total time of $\cOtilde(n^{1/\alpha^2} \cdot |S_x|)$.
    Finally, from \cref{thm:afn}, the call to \textsc{Merge} on \cref{line:merge} takes time $\cOtilde(n^{1/\alpha^2} \cdot \min(|S_x|, |S_y|, n^{1/\alpha^2}))$, which is also dominated by $\cOtilde(n^{1/\alpha^2} \cdot |S_x|)$.

    Now, we show that the sum of $\min(|S_x|, |S_y|)$ over the course of the algorithm is at most $\cO(n \log n)$.
    At the start of the algorithm, each point $z\in X$ belongs to a cluster of size $1$, and, as $T$ is a spanning tree, they will all be in a single cluster of size $n$ when the procedure ends.
    If during an iteration of the loop, $z$ was in the smaller cluster, the size of its cluster will increase at least twofold.
    Therefore, this can happen at most $\cO(\log n)$ times to each point.
    Hence, over all points in $X$, the event ``being in the smallest cluster'' happens at most $\cO(n \log n)$ times.

    Therefore, the running time of the algorithm is $\cOtilde(n^{1+1/\alpha^2})$.
\end{proof}

We now show that the $\ACW$ function is an $\alpha$-approximation of the cut weights with high probability.
Intuitively, putting aside low-probability errors, $\textsc{Query}(\Dd, S_y, q)$ returns a point whose distance
to $q$ is between $d_{\max}/\alpha$ and $d_{\max}$, where $d_{\max}$ is the maximum distance between $q$ and a point of $S_y$.
Therefore, by taking the maximum over all points in $S_x$ and multiplying by $\alpha$,
we obtain a value between $\CutW(e)$ and $\alpha\cdot\CutW(e)$. 
\begin{restatable}{lemma}{lemmaalgocwcorrect}\label{lemma:algocw-correct}
    With high probability, the function $\ACW$ returned by \cref{algo:cw} is an $\alpha$-approximation of
    the cut weights $\CutW$ of $T$.
    More precisely, with probability at least $1 - 1/n$, we have, for every $e\in T$:
    \[\CutW(e) \le \ACW(e) \le \alpha \cdot \CutW(e).\]
\end{restatable}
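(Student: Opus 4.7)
My plan is to split the argument into a deterministic correctness statement conditional on every \textsc{Query} call succeeding, then discharge that conditioning via a union bound. Concretely: (i) identify the two clusters $S_x, S_y$ present when \cref{algo:cw} processes an edge $e = (x, y)$ with the sets $L(e), R(e)$ from \cref{def:cw}; (ii) use the two-sided guarantee of \cref{def:afn} to sandwich $\ACW(e)$ between $\CutW(e)$ and $\alpha \cdot \CutW(e)$; and (iii) control the probability that any \textsc{Query} call fails, reusing the doubling argument of \cref{lemma:algocw-complexity} to bound the total number of queries.

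For step (i), since \cref{algo:cw} enumerates the tree edges by non-decreasing weight and merges two clusters only on the iteration handling the corresponding edge, one checks that, up to the swap on \cref{line:swap}, the pair of clusters $\{S_x, S_y\}$ present at the iteration handling $e$ equals $\{L(e), R(e)\}$, and hence $\CutW(e) = \max_{z \in S_x,\, p \in S_y} \euclideanmetric(z, p)$. For step (ii), condition on the event that every \textsc{Query} call returns a valid $\alpha$-approximate farthest neighbor; then for each $z \in S_x$ enumerated on \cref{line:acw} the point $r_z = \textsc{Query}(\Dd, S_y, z)$ satisfies, by \cref{def:afn},
\[
\frac{1}{\alpha} \max_{p \in S_y} \euclideanmetric(z, p) \;\le\; \euclideanmetric(z, r_z) \;\le\; \max_{p \in S_y} \euclideanmetric(z, p),
\]
the upper bound following from $r_z \in S_y$. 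Taking the maximum over $z \in S_x$ and invoking step (i) gives
\[
\tfrac{1}{\alpha}\CutW(e) \;\le\; \max_{z \in S_x} \euclideanmetric(z, r_z) \;\le\; \CutW(e),
\]
so that the multiplication by $\alpha$ on \cref{line:acw} produces exactly $\CutW(e) \le \ACW(e) \le \alpha \cdot \CutW(e)$.

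For step (iii), I would reuse the doubling argument from \cref{lemma:algocw-complexity}: a fixed point $z \in X$ can belong to the smaller of the two merging clusters at most $\cO(\log n)$ times, since each such occurrence at least doubles its cluster and no cluster ever exceeds size $n$. Hence the total number of \textsc{Query} calls over the whole execution is $\sum_{e \in T} |S_x(e)| = \cO(n \log n)$. Moreover, because \cref{algo:cw} never uses a query's output to decide which edges to process or how to merge, the set of queries it performs is a deterministic function of the input $T$. Combined with the per-query failure probability $1/n^3$ from \cref{thm:afn}, a union bound bounds the global failure probability by $\cO(\log n)/n^2$, which is at most $1/n$ for $n$ large enough. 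I expect the only mild subtlety to be this last step: one has to check that the probability bound of \cref{thm:afn} holds for \emph{each} query point and cluster (not only per data-structure instance), so that the non-adaptive union bound is valid; this is precisely the content of the final sentence of \cref{thm:afn}.
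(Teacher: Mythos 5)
Your proposal is correct and follows essentially the same route as the paper's proof: identify the clusters at each iteration with $L(e)$ and $R(e)$, condition on all \textsc{Query} calls succeeding to obtain the deterministic two-sided bound $\CutW(e) \le \ACW(e) \le \alpha\,\CutW(e)$, and then apply a union bound over the $\cO(n\log n)$ queries with per-query failure probability $1/n^3$. Your closing remark that the set of queries is a deterministic, non-adaptive function of the input (so the union bound is valid) is a small clarification that the paper leaves implicit, but it is the same argument in substance.
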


\begin{proof}
    First, assume that no call to \textsc{Query} returns an incorrect result: we show that in this case, $\ACW$ is an $\alpha$-approximation of the cut weights, and will then show that this event occurs with probability at least $1-1/n$.
    One can show by induction that, at each iteration, we have $S_x = L(e)$ and $S_y = R(e)$ where $e = (x,y)$.
    Therefore, we can rewrite the cut weight of $e$ as
    \[\CutW(e) = \max_{z\in S_x} \max_{z'\in S_y} w(z, z').\]
    Furthermore, the $\textsc{Query}(\Dd, S_y, z)$ algorithm returns a point $r_z$ such that
    $\alpha \cdot w(z, r_z) \ge \max_{z'\in S_y} w(z, z')$.
    Hence, we have:
    \begin{align*}
    \ACW(e) &= \alpha \cdot \max_{z\in S_x} w(z, r_z)\\
                        &\ge \max_{z\in S_x} \max_{z'\in S_y} w(z, z')\\
                        &= \CutW(e)
    \end{align*}
    On the other hand, $w(z, z_r)$ cannot exceed $\CutW(e)$, hence we have $\CutW(e) \le \ACW(e) \le \alpha \CutW(e)$:
    this shows that $\ACW$ is an $\alpha$-approximation of $\CutW$.

    The algorithm makes $\cO(n \log n) = o(n^2)$ calls to \textsc{Query} (see the second part of the proof of \cref{lemma:algocw-complexity})
    therefore by union bound, it errs with probability $o(1/n)$, which is less than $1/n$ for large enough $n$.
\end{proof}

\section{Experiments}\label{sec:experiments}

We evaluate the performance of our algorithm on two types of datasets. First, as in~\cite{CKL20} and \cite{CDL21}, we use five classic real-world datasets to evaluate both the quality of the approximation and the runtime of the algorithms; see \cref{tab:datasets} for details on these datasets.
Second, we use synthetic datasets to evaluate how the runtime of the algorithms scales with larger datasets. In this case, we cannot measure the quality of the approximation given by the algorithms, since this takes quadratic time, which is unreasonably long for large datasets.
We compare our algorithm with the state-of-the-art algorithm of \citet{CDL21}
and the widely used implementation of the \texttt{fastcluster} Python package.

\begin{table}[htbp]
    \centering
    \begin{tabular}{l|rcc}
    \toprule
    \textbf{Dataset} & \multicolumn{1}{|c}{\textbf{size}} & \textbf{dim.} & \textbf{optimal dist.} \\
    \midrule
    \iris & 150 & 4 & 8.07 \\
    \diabetes & 768 & 8 & 5.96 \\
    \mice & 1080 & 77 & 4.92 \\
    \pendigits & 10992 & 16 & 13.86 \\
    \shuttle & 58000 & 9 & 29.72 \\
    \bottomrule
    \end{tabular}
    \caption{Description of the datasets used for evaluation.
    All datasets are publicly available on the UCI ML Repository~\citep{uci_repo}, or Kaggle~\citep{kagglediabetes} for the \diabetes{} dataset.}
    \label{tab:datasets}
    \end{table}

\begin{table*}[htbp]
    \centering
    \begin{tabular}{l|cacaca}
        \toprule
        &\multicolumn{2}{c}{MICE}&\multicolumn{2}{c}{PENDIGITS}&\multicolumn{2}{c}{SHUTTLE}\\
        \multicolumn{1}{c|}{Algorithm} & apx. & $T~(s)$& apx. & $T~(s)$& apx. & $T~(s)$\\
        \midrule
         FKW& $1$ & $0.12s$& $1$ & $8.16s$& $1$ & $236.72s$\\
         \fastbuf($c = 4)$& $1.88$ & $0.12s$& $1.53$ & $1.80s$& $1.41$ & $21.39s$\\
         \fastbuf($c = 9)$& $3.67$ & $0.07s$& $2.26$ & $0.79s$& $1.85$ & $7.82s$\\
         \fastbuf($c = 16)$& $5.58$ & $0.04s$& $2.79$ & $0.52s$& $2.63$ & $4.69s$\\
         CVL& 3.27 & $0.14s$& 1.85 & $0.76s$& 2.41 & $10.17s$\\
        \bottomrule
    \end{tabular}
    \caption{Comparison of \fastbuf{} with the state-of-the-art algorithm by \citet{CDL21}, denoted "CVL" in the table. The ``apx.'' column reports the approximation factor, i.e. the distortion of the output ultrametric normalized by the distortion of the optimal ultrametric, given by quadratic-time algorithm by \citet{FKW95}, denoted "FKW". 
    Each reported distortion or running time value is the average of 30 runs of the algorithm, all standard deviations are less than 10\% for appoximation and $2\%$ for runtimes.}
    \label{table:cmp}
\end{table*}

Experiments presented in this paper were carried out using the Grid'5000 testbed, supported by a scientific interest group hosted by Inria and including CNRS, RENATER and several Universities as well as other organizations (see \url{https://www.grid5000.fr}). 
The experiments were conducted on identical nodes of the Grid'5000 cluster, running Debian GNU/Linux 5.10.0-28-amd64. The hardware configuration includes an Intel(R) Xeon(R) CPU E5-2630 v3 @ 2.40GHz and 126GB of RAM. For the experiments, our algorithm is implemented in the Rust programming language, version 1.79.0 (\texttt{129f3b996}, 2024-06-10). Our code was compiled in \texttt{release} mode.

\subsection{Main experiments}

\paragraph{Experiment 1: Accuracy of the \BUF{} algorithm.}

We measure the performance of our \BUF{} $c$-approximation algorithm using our $\gamma$-KT and $\alpha$-ACW algorithms with $\alpha = \gamma = \sqrt{c}$, which we call \fastbuf.
Based on the results of Experiment A (see Paragraph~\ref{par:experiment_A}), we modify the $\alpha$-ACW algorithm to multiply distances by $\sqrt{\alpha}$ instead of $\alpha$: in practice, this is sufficient to overestimate the cut weights and improves the approximation factor. 
We evaluate \fastbuf for different values of $c$, and for each value we
run the algorithm $t = 30$ times on each of the $5$ datasets, for a total of $150$ runs per value of $c$.

The key takeaway from this experiment is that our algorithm performs significantly better than the worst-case approximation factor $c$. For example, if one wants a $2$-approximation of the best ultrametric embedding of a dataset, they can run the algorithm with a larger parameter, e.g. $c = 9$ instead of $2$. This approach will greatly reduce the running time and space usage from $\tilde{\mathcal{O}}(n^{1.5})$ to $\tilde{\mathcal{O}}(n^{1.11})$ while still providing a high-quality, low-distortion embedding.

We ran \fastbuf for $c = 2, 4, 9, 16, 100, 400$: of the 900 resulting runs, \fastbuf failed to over-approximate the cut weights in only $20$ runs, which is less than $2.3\%$ of the runs.
In addition, the algorithm remains very accurate. For example, when using $c = 9$, the procedure returned $2$-approximation of the best ultrametric fit in all but 5 of the 150 runs. The results for $c = 2, 4, 9, 16$ are reported in \cref{fig:buf_var} (values $c = 100, 400$ are omitted here for readability of the figure).
\begin{figure}[h]
    \centering
    \includegraphics[width=0.7\textwidth]{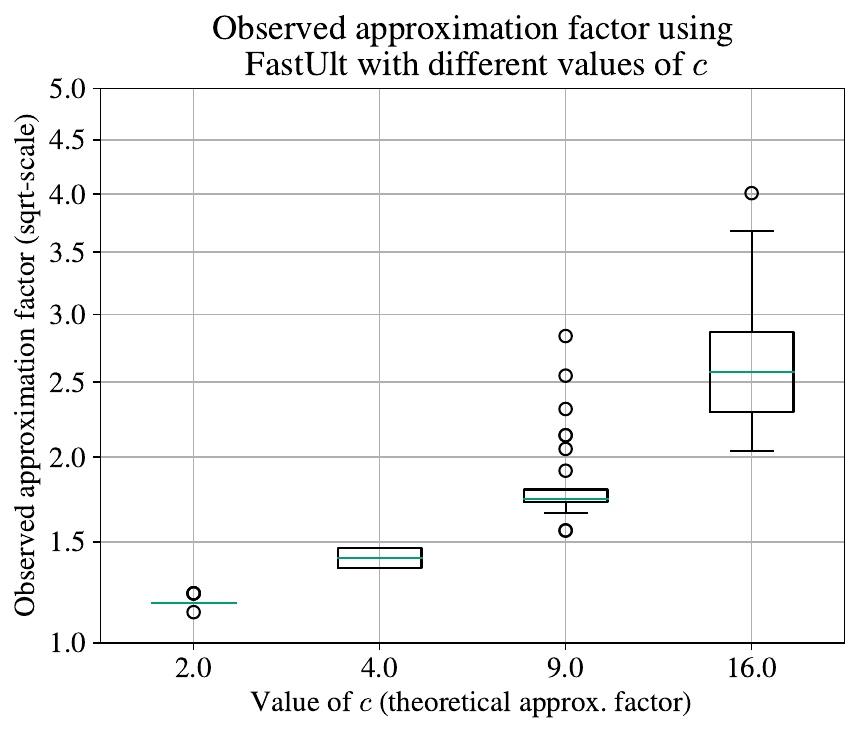}
    \caption{Approximation factor obtained by \fastbuf for different values of $c$.}
    \label{fig:buf_var}
\end{figure}

\paragraph{Experiment 2: Comparison with previous methods.}
We compare the performance of \fastbuf{} with the previous best known algorithm for \BUF{} by \citet{CDL21}, both in terms of the quality of the approximation factor of the best ultrametric and the running time. The results are given in \cref{table:cmp}. These results show that \fastbuf{} with $c = 4$ can achieve better ultrametric embeddings than the algorithm of \citet{CDL21} for a comparable computational cost. Furthermore, by using $c = 9$ or $16$, we can obtain embeddings with similar distortion but with a lower running time. Thus, another advantage of \fastbuf{} is the ability to trade off approximation factor and running time by adjusting the parameter $c$, for any desired embedding quality.

\paragraph{Experiment 3: Scaling.}

Finally, to evaluate how well our algorithm scales to larger datasets, we create two synthetic datasets with up to one million data points. These datasets contain $N$ uniformly random points from $[0, 1]^d$, for $(N,d) \in [(10^5, 100), (10^6, 10)]$. The running times are given in \cref{table:scale}.

\begin{table}[htbp]
    \centering
    \begin{tabular}{l|rr}
        \toprule
        & $N = 10^{5}$&$N = 10^{6}$ \\
         \multicolumn{1}{c|}{Algorithm} & $d = 100$&$d = 10$ \\
        \midrule
         \fastbuf($c = 4)$ & $1$m $39.89$s&$21$m $58.50$s\\
         \fastbuf($c = 9)$ & $34.07$s&$5$m $13.99$s\\
         \fastbuf($c = 16)$ & $19.24$s&$2$m $34.97$s\\
         CVL &  $7.96$s&  $1$m $54.48$s\\
         Single Linkage&  $9$m $55.36$s&  $\ge 10$h\\
        \bottomrule
    \end{tabular}
    \caption{Running time of the \fastbuf algorithm, the algorithm of \citet{CDL21} and the \textit{single linkage} algorithm from the \texttt{fastcluster} python package on datasets of $N$ $d$-dimensional random points. Each reported running time is an average of 30 runs.}
    \label{table:scale}
\end{table}

We observe that, while our algorithm is slower than that of \citet{CDL21}, \fastbuf does not suffer from the same quadratic blow-up as classical algorithms such as single linkage, and maintains a reasonable running time that can be afforded for the analysis of large datasets.

\subsection{Additional experiments}

\paragraph{Experiment A: Accuracy of the $\alpha$-approximate cut weights (ACW) algorithm.}
\label{par:experiment_A}
On \cref{line:acw} of the approximate cut weights algorithm (\cref{algo:cw}), we multiply the distance to the approximate farthest neighbor by $\alpha$ to ensure that $\ACW(e)$ is at least $\CutW(e)$ for every edge $e$.
However, our AFN data structure turns out to be very accurate in practice: it often returns points whose distance to the query point $q$ is close to maximal.
Therefore, the multiplication by $\alpha$ artificially increases the approximation factor of the algorithm\footnote{Recall that the approximation factor of the algorithm is the ratio of the distortion of the computed ultrametric divided by the distortion of the optimal ultrametric.}: multiplying by another number $c < \alpha$ would suffice in practice.

To quantify this phenomenon, we compute the approximation factor when multiplying by the smallest constant $c^*$ (the same for all queries) such that $\ACW(e)$ is at least $\CutW(e)$ for every $e$, instead of multiplying by $\alpha$.
To only measure the effect of $\alpha$-ACW algorithm on the approximation factor, we run this algorithm on an exact minimum spanning tree instead of a $\gamma$-KT.
The optimal distortion is computed using the algorithm of \citet{FKW95}.
The results are reported in \cref{fig:cw_var}.

\begin{figure}[htbp]
    \centering
    \includegraphics[width=0.7\textwidth]{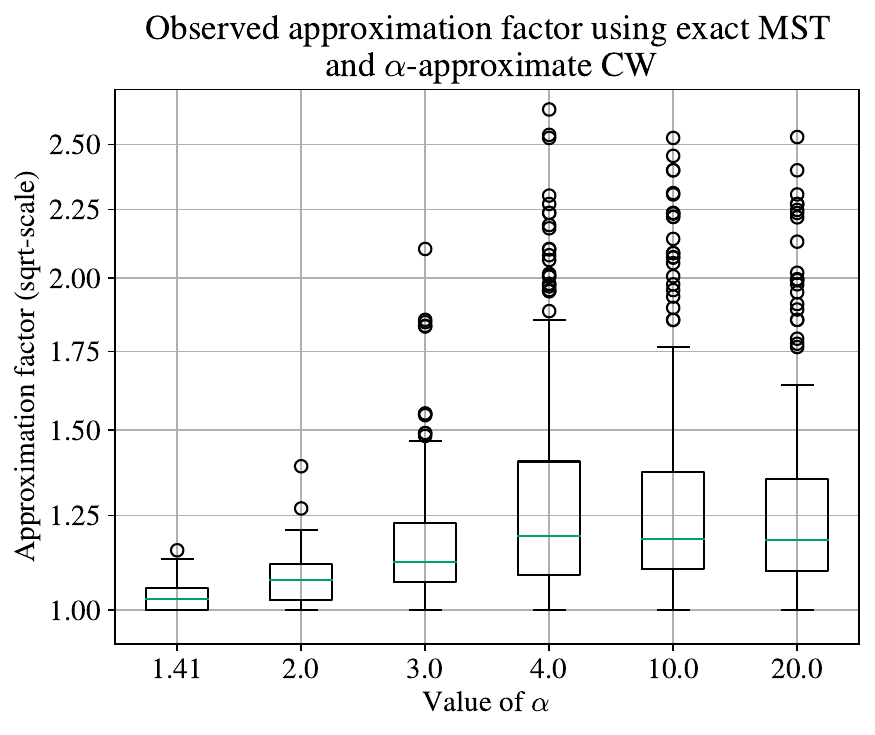}
    \caption{Accuracy of the $\alpha$-approximate cut weights algorithm. For each value of $\alpha$, the algorithm is run 30 times on each of the 5 datasets, resulting in 150 data points per boxplot.
    }
    \label{fig:cw_var}
\end{figure}

In practice, the observed approximation factor is much lower than the theoretical approximation factor: $75\%$ of the time, the observed approximation factor is less than $1.5$, and it is always less than $3$, even for $\alpha = 20$.
This means that, for $\alpha = 20$, the $\alpha$-ACW algorithm could multiply the distances by $3$ instead of $\alpha$ and still obtain an over-approximation of the cut weights.

\paragraph{Experiment B: Accuracy of the $\gamma$-KT algorithm.}
Next, we measure the accuracy of the $\gamma$-KT algorithm.
We run the exact cut weights algorithm on the output of the $\gamma$-KT algorithm, and compare the distortion of the resulting ultrametric to the optimal ultrametric (given by the algorithm of \citet{FKW95}).
The results are given in \cref{table:mst}: again we observe that the $\gamma$-KT algorithm performs much better than the theoretical guarantee.

\begin{table}[htbp]
    \centering
    \begin{tabular}{r|cc}
        \toprule
        \multicolumn{1}{c|}{MST} & Theoretical & Observed\\
        algorithm & approx. factor & approx. factor\\
        \midrule
        $1.41$-KT & $1.41$ & $1.06 \pm 0.08$ \\
        $2.0$-KT & $2.0$ & $1.23 \pm 0.23$ \\
        $3.0$-KT & $3.0$ & $1.63 \pm 0.47$ \\
        $4.0$-KT & $4.0$ & $1.87 \pm 0.64$ \\
        $10.0$-KT & $10.0$ & $3.37 \pm 2.21$ \\
        $20.0$-KT & $20.0$ & $5.22 \pm 5.57$ \\
        \bottomrule
    \end{tabular}
    \caption{Accuracy of the $\gamma$-KT algorithm with exact cut weights. For each value of $\gamma$, 
        the average and standard deviation of the approximation factor are computed over
        $30$ runs of the algorithm on each of the $5$ datasets, for a total of $150$ data points.}
    \label{table:mst}
\end{table}


\bibliographystyle{plainnat}
\bibliography{biblio}

\end{document}